%
\documentclass[runningheads]{llncs}
\usepackage{graphicx}

\usepackage{booktabs} 

\usepackage{hyperref}
\usepackage{url}
\usepackage[english]{babel}
\usepackage{algorithm}
\usepackage[noend]{algpseudocode}
\usepackage{amssymb,amsmath}
\usepackage{subcaption}
\usepackage{graphicx}
\usepackage{comment}
\usepackage{thmtools, thm-restate}
\usepackage{multirow}

\usepackage{xcolor} 
\definecolor{ocre}{RGB}{243,102,25} 

\usepackage{avant} 
\usepackage{mathptmx} 

\usepackage{microtype} 
\usepackage[utf8x]{inputenc} 
\usepackage[T1]{fontenc} 



\usepackage{tikz}
\usetikzlibrary{shapes}
\RequirePackage{ifthen}
\RequirePackage{amsmath}
\RequirePackage{amssymb}
\RequirePackage{xspace}
\RequirePackage{graphics}
\usepackage{color}
\RequirePackage{textcomp}
\usepackage{keyval}
\usepackage{xspace}
\usepackage{mathrsfs}
\usepackage{enumitem}

\newcommand{\sayan}[1]{\textcolor{blue}{#1}}

\newcommand{\reals}{{\mathbb{R}}}

\newcommand{\dom}{\relax\ifmmode {\sf dom} \else ${\sf dom}$\fi}

\newcommand{\Reach}{\mathit{Reach}} 

\newcommand{\Rtube}{\mathtt{ReachTb}}
\newcommand{\ARtube}{\mathtt{ReachTb}}
\newcommand{\AReach}{\mathtt{ReachTb}} 
\newcommand{\argmax}{\sf{argmax}} 

\newcommand{\node}{\mathit{node}}

\newcommand{\reachset}{\mathtt{Reach}}

\newcommand{\concat}{\mathbin{^{\frown}}} 

\newcommand{\tubecache}{\mathit{tubecache}}
\newcommand{\safetycache}{\mathit{safetycache}}
\newcommand{\symcompute}{\mathtt{symComputeReachtube}} 
\newcommand{\symsafety}{\mathtt{symSafetyVerif}}
\newcommand{\tubecompute}{\mathtt{computeReachtube}}
\newcommand{\safe}{\mathit{safe}}
\newcommand{\unsafe}{\mathit{unsafe}}
\newcommand{\state}{{\bf x}}
\newcommand{\statey}{{\bf y}}
\newcommand{\rtube}{\mathit{rtube}}
\newcommand{\ourtool}{\sf{CacheReach}}
\newcommand{\gettube}{\sf{getTube}}
\newcommand{\multiagentverif}{\sf{symMultiVerif}}
\newcommand{\dynamicsafety}{\sf{checkDynamicSafety}}
\newcommand{\modetube}{\mathit{mtube}}
\newcommand{\agenttube}{\mathit{atube}}
\newcommand{\initset}{\mathit{initset}}
\newcommand{\guardintersect}{\mathit{guardIntersect}}
\newcommand{\guard}{\mathit{guard}}
\newcommand{\ltube}{\mathit{len}}
\newcommand{\relevant}{\mathit{O}}
\newcommand{\ftime}{\mathit{ftime}}
\newcommand{\etime}{\mathit{etime}}

\setlength{\belowcaptionskip}{-10pt}

\linespread{1}

\usepackage{setspace}
\usepackage{listings}
\usepackage{hyperref}
\hypersetup{
	pdftitle={Modeling and Verification for CPS},
	pdfauthor={Sayan Mitra},
	colorlinks=true,
	citecolor={blue},
	linkcolor = {blue},
	pagecolor={blue},
	backref={true},
	bookmarks=true,
	bookmarksopen=false,
	bookmarksnumbered=true
}






\usepackage[normalem]{ulem}

\usepackage{tikz}
\usetikzlibrary{fit}
\usepackage{twoopt}

\newsavebox\IBoxA \newsavebox\IBoxB \newlength\IHeight
\newcommand\TwoFig[6]{
	\sbox\IBoxA{\includegraphics[width=0.45\textwidth]{#1}}
	\sbox\IBoxB{\includegraphics[width=0.45\textwidth]{#4}}%
	\ifdim\ht\IBoxA>\ht\IBoxB
	\setlength\IHeight{\ht\IBoxB}%
	\else\setlength\IHeight{\ht\IBoxA}\fi
	\begin{figure}[!htb]
		\minipage[t]{0.45\textwidth}\centering
		\includegraphics[height=\IHeight]{#1}
		\caption{#2}\label{#3}
		\endminipage\hfill
		\minipage[t]{0.45\textwidth}\centering
		\includegraphics[height=\IHeight]{#4}
		\caption{#5}\label{#6}
		\endminipage 
	\end{figure}%
}

\newcommandtwoopt\Textbox[5][2.5cm][2cm]{%
	\begin{tikzpicture}[remember picture,overlay]
	\coordinate (aux) at ([xshift=#1]#4);
	\node[inner ysep=3pt,yshift=0.6ex,draw=gray,thick,
	fit=(#3) (aux),baseline] 
	(box) {};
	\node[text width=#2,anchor=north east,
	font=\sffamily\footnotesize,align=right] 
	at (box.north east) {#5};
	\end{tikzpicture}%
}

%
\titlerunning{Multi-Agent Safety Verification using	 Symmetry Transformations}
\authorrunning{Sibai, Mokhlesi, Fan, and Mitra}

\begin{document}
\title{Multi-Agent Safety Verification using Symmetry Transformations}
%
%
\author{Hussein Sibai \and
Navid Mokhlesi\and
Chuchu Fan \and
Sayan Mitra
\institute{University of Illinois, Urbana IL 61801, USA}
\email{\{sibai2,navidm2,cfan10,mitras\}@illinois.edu}
}
%
%

%
\maketitle              
\begin{abstract}
We show that symmetry transformations and caching can enable scalable, and possibly unbounded, verification of multi-agent systems. 
Symmetry transformations map solutions and to other solutions. We show that this property can be used to transform cached reachsets to compute new reachsets, for hybrid and multi-agent models. 
We develop a notion of {\em virtual system\/} which  define symmetry transformations for a broad class of agent models that visit waypoint sequences. 
Using this notion of virtual system, we present a prototype tool $\ourtool$  that builds a cache of reachtubes for this system, in a way that is   agnostic of the representation of the reachsets and the reachability analysis subroutine used. 
Our experimental evaluation of $\ourtool$ shows up to 66\% savings in safety verification computation time on multi-agent systems with 3-dimensional linear and 4-dimensional nonlinear fixed-wing aircraft models following sequences of waypoints. 
These savings and our theoretical results illustrate the potential benefits of using symmetry-based caching in the safety verification of multi-agent systems.
\end{abstract}

\section{Introduction}
\label{sec:intro}

As the cornerstone for safety verification of dynamical and hybrid systems, reachability analysis has attracted  attention and has delivered automatic analysis  of automotive, aerospace, and medical applications~\cite{Online-AlthoffD14,KushnerBCFMS19,FanQM18,DuggiralaFM015}.
Notable advances from the last few years include the development of the generalized star data-structure~\cite{Star:Duggirala016} and the HyLaa tool~\cite{bak2017hylaa} that have massive linear models~\cite{BakTJ19}; Taylor model based reachability analysis algorithms for nonlinear systems and their implementations in Flow*~\cite{flow}; and  a simulation-based algorithm that guarantees locally optimal precision~\cite{FanKJM16:Emsoft}.

Exact symbolic reachability analysis of nonlinear models is generally hard. One prominent approach is based on generalizing individual behaviors or simulations to cover a whole set of behaviors. The idea was pioneered in~\cite{donze2007systematic} and implemented in Breach~\cite{breachAD} with sound generalization guarantees for linear models based on {\em sensitivity analysis}. Subsequently the idea has been extended to nonlinear, hybrid, and black-box systems and implemented in tools like C2E2 and DryVR~\cite{DMVemsoft2013,FanQM0D16,FanQM18,FanM15:ATVA}. 

In all of the above, a single behavior $\xi$ of the system from an  initial state, is generalized to a {\em compact set\/}  of {\em neighboring\/} behaviors that contains all the behaviors   starting from a small neighborhood around the initial state of $\xi$. Thus, the computed neighboring set of behaviors always contains $\xi$ and its size is determined by the algorithms for sensitivity analysis. 
In contrast, the type of generalization we pursue here  uses {\em symmetry transforms\/} on the state space. Given a group $\Gamma$ of operators on the state space, and a  single behavior $\xi$, we can generalize it to $\gamma(\xi)$, for each $\gamma \in \Gamma$. 
The transformation can be applied to sets of behaviors symbolically. Not only can this type of generalization work in conjunction with sensitivity analysis, it captures structural properties of the system that make its behavior similar for significantly different states that sensitivity analysis would not capture. 
%

In our recent work~\cite{Sibai:ATVA2019}, we showed how this symmetry transforms can be used to produce new reach sets from other previously computed reach sets for non-parameterized dynamical systems. 
In this paper, we introduce the use of symmetry transforms of parameterized dynamical systems in their safety verification. We present an algorithm $\symcompute$ (Algorithm~\ref{code:sym_annotations}) which cache and reuse reach sets to generate new ones, avoiding repeating expensive computations.
We show how an infinite number of reach sets can be obtained by transforming a single one using symmetry transforms (Corollary~\ref{cor:vir_real_eq_inter}). Building on it, we provide unbounded time safety guarantees using finite cached safety checking results (Theorem~\ref{thm:unboundedsafety}).






The key contributions of this paper are as follows.

First, we show how symmetry transformations for parameterized dynamical systems can be used to compute reachable states (Theorem~\ref{thm:tube_trans_input}). Going well beyond the previous theory~\cite{Sibai:ATVA2019}, this enables {\em cached reach tubes\/} to be reused for verification across different modes and across multiple agents. 

We develop a notion of {\em virtual system\/} (Section~\ref{sec:virtual}) which automatically define symmetry transformations for a broad swathe of hybrid and dynamical systems modeling agents visiting a sequence of waypoints (see Theorem~\ref{thm:virtual_rep} and Examples~\ref{sec:virtual_example} and~\ref{sec:virtual_tube_example}). That is, reachability analysis of a multi-agent system (with possibly different dynamics and different parameters) can be performed in a common transformed virtual coordinate system, and thus, increasing the possibility of reuse. 
We show how this principle can make it possible to verify systems over unbounded time and with unbounded number of agents (Theorem~\ref{thm:unboundedsafety}), provided we infer that no new unproven scenarios appear for the virtual system. 

We present a prototype implementation of a tool that uses $\symcompute$. We name it $\ourtool$. It builds a cache of reach tubes for the transformed virtual system, from different sets of initial states. In performing reachability analysis of a multi-agent hybrid or dynamical system, for each agent and each mode, it: 
1. transforms the initial set  $X$ to  the virtual coordinates to get $\gamma(X)$. 
2. If the transformed set $\gamma(X)$ has already been stored in the cache, then it simply extracts the cached reach tube and $\gamma^{-1}$ transforms it to get the actual reach set.  
3. Otherwise, it computes the reach set from $\gamma(X)$ and caches it. 
Our algorithm $\symcompute$ and its implementation in $\ourtool$ are agnostic of the representation of the reach sets and the reachability analysis subroutine, and therefore, any of the ever-improving libraries can be plugged-in for step 3.

Our experimental evaluation of $\ourtool$ shows safety verification computation time savings of up to 66\% on scenarios with multiple agents with 3d linear dynamics and 4d fixed-wing aircraft nonlinear model following sequences of waypoints. 
These savings illustrate the potential benefits of using symmetry transformation-based caching in the safety verification of multi-agent systems.

\section{Model and problem statement}
\label{sec:model}
\paragraph{Notations.}
We denote by $\mathbb{N}$, $\mathbb{R}$, and $\mathbb{R}^{\geq 0}$ the sets of natural numbers, real numbers and non-negative reals, respectively. Given a finite set $S$, its cardinality is denoted by $|S|$. Given $N \in \mathbb{N}$, we denote by $[N]$ the set $\{1, \dots, N\}$. Given a vector $v \in \mathbb{R}^n$ and a set $L \subseteq [n]$, we denote the projection of $v$ to the indices in $L$ by $x[L]$. We define an $n$-dimensional hyper-rectangle by a 2d-array specifying its bottom-left and upper-right vertices in $\mathbb{R}^n$. We denote the projection of a hyperrectange $H$ on the set of dimensions $L$ by $H \downarrow_L$. 
Given a function $\gamma: \mathbb{R}^k \rightarrow \mathbb{R}^k$ and a set $S \subseteq \mathbb{R}^k$, we abuse notation and define 
$\gamma(S) = \{ \gamma(x)\ |\ x \in S\}$. 
Moreover, given $S \in 2^{\mathbb{R}^k} \times \mathbb{R}^{\geq 0}$, we define $\gamma(S) = \{ (\gamma(X), t)\ |\ (X,t) \in S \}$.

\subsection{Agent mode dynamics}
\label{sec:agent-model}

In this section, we define the syntax and semantics of the model that determines the dynamics of an agent. We present the syntax first.

\begin{definition}[syntax]
\label{def:model_syn}
The agent dynamics are defined by a tuple $A = \langle S, P, f \rangle$, where $S \subseteq \mathbb{R}^n$ is its state space, $P \subseteq \mathbb{R}^m$ is its parameter or mode space, and the dynamic function $f: S \times P \rightarrow S$ that is Lipschitz in the first argument.
\end{definition}

The semantics of an agent dynamics is defined by trajectories, which describe the evolution of states over time.

\begin{definition}[semantics]
\label{def:model_sem}
For a given agent $A = \langle S, P, f \rangle$, we call a function $\xi: S \times P \times \mathbb{R}^{\geq 0} \rightarrow S$ a {\em trajectory} if $\xi$ is differentiable in its third argument, and given an initial state $\state_0 \in S$ and a mode $p \in P$, $\xi(\state_0,p,0) = \state_0$ and for all $t > 0$,
\begin{align}
\label{sys:input}
\frac{d \xi}{d t}(\state_0,p,t) = f(\xi(\state_0,p,t), p).
\end{align}
We say that $\xi(\state_0,p,t)$ is the state of $A$ at time $t$ when it starts from $\state_0$ in mode $p$. 

%
\end{definition}
Given an initial state $\state_0 \in S$ and mode $p \in P$, the trajectory $\xi(\state_0, p,\cdot)$ is the unique solution of the ordinary differential equation (ODE) (\ref{sys:input}) since $f$ is Lipschitz continuous.

Given a compact initial set $K \subseteq S$, a parameter $p \in P$, the set of {\em reachable states\/} of $A$ over a  time interval $[\ftime,\etime]$ is defined as 
\begin{align}
\label{eq:reach}	
\reachset(K, p, [\ftime,\etime]) = \{\state \in S \ |\ \exists \state_0 \in K, t \in [\ftime, \etime], \state = \xi(\state_0, p, t) \}.
\end{align}
We let $\reachset(K, p, t)$ denote the set of reachable states at time $t$.
With unbounded time intervals, the set is denoted by $\reachset(K,p,[\ftime,\infty))$.
%

The {\em bounded time safety verification}  problem is to check if any state reachable by $A$ for a given $p \in P$ within the time bound is unsafe. That is, given a time bound $T > 0$, $p \in P$, and an unsafe set $U \subseteq S$, we want to check whether 
\[ \reachset(K,p,[0,T]) \cap U = \emptyset\] 
If the unsafe set is time dependent, i.e., $U \subseteq S \times \mathbb{R}^{\geq 0}$, then we need to use $\reachset(K,p,t)$ instead of $\reachset(K,p,[0,T])$ for checking safety.

\subsection{Reachtubes}
Computing reachsets exactly is theoretically hard \cite{HENZINGER199894}.
There are many reachability analysis tools~\cite{CAS13,Althoff2015a,bak2017hylaa} that can compute bounded-time over-approximations of the reachsets. Generally, given an initial set $K$ for a set of ODEs, these tools can return a sequence of sets that contain the exact reachset over small time intervals. Motived by this, we define reachtubes as over-approximations of exact reachsets:
\begin{definition}
For a given agent $A = \langle S, P, f \rangle$, an initial set $K \subseteq T$, a mode $p \in P$, and a time interval $ [\ftime,\etime]$,
a $(K,p,[\ftime,\etime])$-reachtube $\Rtube(K, p, [0,T])$ is  a sequence $\{(X_i,[\tau_{i-1},\tau_{i}])\}_{i=1}^{j}$ such that
$\reachset(K, p, [\tau_{i-1},\tau_{i}]) \subseteq X_i$, and $\tau_0 = \ftime < \tau_1 < \dots < \tau_j = \etime$.
Without loss of generality, we assume equal separation between the time points, i.e. $\exists\ \tau_s > 0,  \forall i \in [j], \tau_{i} - \tau_{i-1} = \tau_s$. 
\end{definition}
For a given $(K,p,[\ftime,\etime])$-reachtube $\rtube$, we denote its parameters by $\rtube.K$, $\rtube.p$, $\rtube.\mathit{\ftime}$, and $\rtube.\mathit{\etime}$, respectively, and its cardinality by $\rtube.\ltube$.

  We define union, truncation, concatenate, and time-shift operators on reachtubes. Fix $\rtube_1 = \{(X_{i,1}, [\tau_{i-1,1}, \tau_{i,1}])\}_{i=1}^{j_1}$ and $\rtube_2 = \{(X_{i,2}, [\tau_{i-1,2},\tau_{i,2}])\}_{i=1}^{j_2}$ to be two reachtubes, where $j_1 = \rtube_1.\ltube$ and $j_2 = \rtube_2.\ltube$. Assume that $\tau_{i,1} = \tau_{i,2}$ for all $i \in \min(j_1, j_2)$, we say they are {\em time-aligned}. Without loss of generality, assume that $j_1 \leq j_2$.  The operators are defined as follows:
\begin{itemize}
	\item {\em union}: $\rtube_1 \cup \rtube_2 = \{ (X_{i,1} \cup X_{i,2},   [\tau_{i-1,1},\tau_{i,1})]\}_{i=1}^{j_1} \cup \{X_{i,2}\}_{i=j_1 + 1}^{j_2}$.  
	\item $\mathit{timeShift}(\rtube_1, t_s) = \{(X_{i,1}, [t_s + \tau_{i-1,1}, t_s + \tau_{i,1}])\}_{i=1}^{j_1}$.
	\item {\em concatenation}: $\rtube_1 \concat \rtube_2  = \{\rtube_1 \cup \mathit{timeShift}(\rtube_2, \tau_{j_1,1}) \}$.
	\item $\mathit{truncate}(\rtube_1, t_c) = \{(X_{i,1},[\tau_{i-1,1},\tau_{i,1}] )\}_{i=1}^{k}$, where $\tau_{k,1} \geq t_c$ and $\tau_{k-1,1} < t_c$.
\end{itemize}

A {\em simulation\/} of system~(\ref{sys:input}) is an approximate reachtube with $X_0$ being a singleton state $x_0 \in K$. That is, a simulation is a representation of $\xi(x_0,p, \cdot)$. 
Several numerical solvers can compute such simulations as VNODE-LP\footnote{\url{http://www.cas.mcmaster.ca/~nedialk/vnodelp/}} and CAPD Dyn-Sys library \footnote{\url{http://capd.sourceforge.net/capdDynSys/docs/html/odes_rigorous.html}}. 
\begin{example}[Fixed-wing aircraft following a single waypoint]
\label{sec:single_drone_example}
Consider an agent with state space $S = \mathbb{R}^4$, parameter space $P = \mathbb{R}^4$, and $f: S \times P \rightarrow S$ defined as follows: for any $\state \in S$ and $p \in P$,
\[
f(\state, p) = [\frac{T_c - c_{d1}\state[0]^2}{m}, \frac{g}{\state[0]} \sin \phi,  \state[0] \cos \state[1], \state[0] \sin \state[1]], 
\]
where  $T_c = k_1 m (v_c - \state[0])$, $\phi = k_2 \frac{v_c}{g} (\psi_c - \state[1])$, $\psi_c = \arctan(\frac{\state[2] - p[2]}{\state[3] - p[3]})$, and $k_1, k_2, m, g, c_{d1}$, and $v_c$ are positive constants. The agent models a fixed-wing aircraft starting from a waypoint and following another in the 2D plane: $\state[0]$ is its speed, $\state[1]$ is its heading angle, $(\state[2], \state[3])$ is its position in the plane, $[p(0), p(1)]$ is the position of the source waypoint, and $(p[2],p[3])$ is the position of the  destination one. Note that the source waypoint does not affect the dynamics, but will be useful later in the paper.
\end{example}

\section{Symmetry and Equivariant Dynamical Systems}
\label{sec:symdef}

Symmetry plays a fundamental rule in the analysis of dynamical and control systems. It has been used for studying stability of feedback systems\cite{symmetryandstabilityfeedback}, designing observers \cite{bonnabel2008symmetry} and  controllers~\cite{controlledsymmetries_passivewalking}, and analyzing neural networks \cite{Gerard2006NeuronalNA}. In this section, we present definitions for symmetry transforms and their implications on systems that posses them.

\subsection{Symmetry of systems with inputs}
 
In the following, symmetry transformations are defined by the ability of computing new solutions of~(\ref{sys:input}) using already computed ones. First, let $\Gamma$ be a group of smooth maps acting on $S$.

\begin{definition}[Definition 2 in \cite{russo2011symmetries}]
	\label{def:symmetry}
 We say that $\gamma \in \Gamma$ is a symmetry of (\ref{sys:input}) if for any solution $\xi(x_0,p, \cdot)$, $\gamma (\xi(\state_0,p, \cdot))$ is also a solution.
\end{definition} 

	Using $\gamma$-symmetry, we can get a new trajectory without simulating the system but instead by just transforming the entire old trajectory using $\gamma(\cdot)$.

For a linear dynamical system $\dot{x} = A x$, it is easy to see that any matrix $B$ that commutes with $A$ is a symmetry for the system. Consider the dynamics of an agent to be $f(\state,p) = A\state$. We can write down the closed-form solution of the trajectories to be $\xi(\state_0,p,t) = \state_0 e^{At}$. If $\gamma(\state) = B \state, \forall \state \in S$ and $AB=BA$, then $\xi(B \state_0,p,t) = B \state_0 e^{At} = B \xi(\state_0,p,t)$.  Therefore, $B$ is a symmetry for the system if $A B = B A$.

In the following definition we characterize the conditions under which a transformation is a symmetry of (\ref{sys:input}).
\begin{definition}
	\label{def:equivariance_input}
	The dynamic function $f: S \times P \rightarrow S$ is said to be $\Gamma$-equivariant if for any $\gamma \in \Gamma$, there exists $\rho: P \rightarrow P$ such that for all $\state \in S$, $\frac{\partial \gamma}{\partial \state} f(\state, p) = f(\gamma(\state), \rho_\gamma(p))$.
\end{definition}

The following theorem shows that it is enough to check the condition in Definition~\ref{def:equivariance_input} to prove that a transformation is a symmetry of~(\ref{sys:input}). 

\begin{theorem}[part of Theorem 10 in \cite{russo2011symmetries}]
	\label{thm:sol_transform_input_nonlinear}
	If $f$ is $\Gamma$-equivariant, then for any $\gamma \in \Gamma$, if $\xi(\state_0, p,\cdot)$ is a solution of (\ref{sys:input}), then so is $\gamma (\xi(\state_0,p,\cdot))$, which is equal to $\xi(\gamma(\state_0),\rho_\gamma(p),\cdot)$, where $\rho$ is the transformation associated with $\gamma$ according to Definition~\ref{def:equivariance_input}.
\end{theorem}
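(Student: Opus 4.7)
The plan is to show that $\eta(t) := \gamma(\xi(\state_0, p, t))$ satisfies the ODE~(\ref{sys:input}) for parameter $\rho_\gamma(p)$ with initial condition $\gamma(\state_0)$, and then invoke uniqueness of solutions to identify it with $\xi(\gamma(\state_0), \rho_\gamma(p), \cdot)$. Since $\gamma$ is smooth and $\xi(\state_0, p, \cdot)$ is differentiable in time, $\eta$ is differentiable in time as a composition.

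First I would compute $\dot{\eta}$ using the chain rule: for every $t \ge 0$,
\begin{align*}
\dot{\eta}(t) \;=\; \frac{\partial \gamma}{\partial \state}\bigl(\xi(\state_0,p,t)\bigr)\, \dot{\xi}(\state_0,p,t) \;=\; \frac{\partial \gamma}{\partial \state}\bigl(\xi(\state_0,p,t)\bigr)\, f\bigl(\xi(\state_0,p,t), p\bigr),
\end{align*}
where the second equality uses that $\xi(\state_0, p, \cdot)$ solves~(\ref{sys:input}). Applying the $\Gamma$-equivariance property from Definition~\ref{def:equivariance_input} pointwise at $\state = \xi(\state_0,p,t)$, the right-hand side equals $f\bigl(\gamma(\xi(\state_0,p,t)), \rho_\gamma(p)\bigr) = f(\eta(t), \rho_\gamma(p))$. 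Hence $\eta$ satisfies $\dot{\eta}(t) = f(\eta(t), \rho_\gamma(p))$ for all $t > 0$, which shows that $\gamma(\xi(\state_0,p,\cdot))$ is a solution of~(\ref{sys:input}) (for the transformed parameter $\rho_\gamma(p)$), thereby establishing the first claim in the theorem.

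For the equality $\gamma(\xi(\state_0,p,\cdot)) = \xi(\gamma(\state_0),\rho_\gamma(p),\cdot)$, I would observe that $\eta(0) = \gamma(\xi(\state_0,p,0)) = \gamma(\state_0)$ by Definition~\ref{def:model_sem}. Thus $\eta(\cdot)$ is a trajectory of $A$ starting from $\gamma(\state_0)$ under mode $\rho_\gamma(p)$. Since $f$ is Lipschitz in its first argument, the solution of~(\ref{sys:input}) with initial state $\gamma(\state_0)$ and parameter $\rho_\gamma(p)$ is unique, and it is precisely $\xi(\gamma(\state_0), \rho_\gamma(p), \cdot)$ by Definition~\ref{def:model_sem}. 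Uniqueness therefore gives $\eta(t) = \xi(\gamma(\state_0), \rho_\gamma(p), t)$ for all $t \ge 0$.

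I do not anticipate a serious obstacle: the only slightly subtle point is ensuring that the chain-rule step is justified, which requires the smoothness of $\gamma$ (given since $\Gamma$ is a group of smooth maps on $S$) together with the differentiability of $\xi$ in $t$ from Definition~\ref{def:model_sem}. The appeal to uniqueness is immediate from the Lipschitz assumption on $f$ built into Definition~\ref{def:model_syn}. The argument does not require $\gamma$ to be invertible, but if desired one can symmetrize by applying the same reasoning to $\gamma^{-1}$.
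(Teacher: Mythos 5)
Your proof is correct and follows essentially the same route as the paper's: apply the chain rule to $\gamma(\xi(\state_0,p,\cdot))$ and then use the equivariance condition of Definition~\ref{def:equivariance_input} to rewrite the result as the vector field evaluated at the transformed state and parameter. The only difference is that you explicitly supply the initial-condition check and the uniqueness argument (from the Lipschitz assumption) to justify the identification with $\xi(\gamma(\state_0),\rho_\gamma(p),\cdot)$, a step the paper's proof leaves implicit.
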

\begin{proof}
	Let ${\bf y} = \gamma(\state)$, then $\dot{{\bf y}} = \frac{\partial \gamma}{\partial \state} (\dot{\state}) = \frac{\partial \gamma}{\partial \state} (f(\state,p)) = f(\gamma(\state), \rho(p)) = f({\bf y}, \rho(p))$. The second equality is a result of the derivative chain rule. The $3^{\mathit{rd}}$ equality uses Definition~\ref{def:equivariance_input}.
\end{proof}

\begin{remark}
	Note that if $\gamma$ in Theorem~\ref{thm:sol_transform_input_nonlinear} is linear, the condition in Definition~\ref{def:equivariance_input} for a map $\gamma$ to be a symmetry becomes $\gamma(f(\state,p)) = f(\gamma(\state),\rho(p))$. 
\end{remark}

Note that the commutativity condition with the vector field of the ODE in Definition~\ref{def:equivariance_input} is similar to the properties of Lyapanov functions in the sense that it proves a property about the solutions without inspecting them individually. 

\begin{example}[Fixed-wing aircraft coordinate transformation symmetry]
\label{sec:transformation_example}
Consider the fixed-wing aircraft model presented in Section~\ref{sec:agent-model}. Fix $\mathit{goal} \in \mathbb{R}^2$ and $\theta \in \mathbb{R}$.
Let $\gamma : \mathbb{R}^4 \rightarrow \mathbb{R}^4$ and $\rho : \mathbb{R}^4 \rightarrow \mathbb{R}^4$ be defined as:
\begin{align}
\gamma(\state) &= [\state[0],\state[1] +\theta,(\state[2] - \mathit{goal}[0])\cos(\theta) + (\state[3] - \mathit{goal}[1])\sin(\theta),\nonumber\\
&\hspace{0.5in} -(\state[2] - \mathit{goal}[0])\sin(\theta)  + (\state[3] - \mathit{goal}[1])\cos(\theta)] \text{ and }\\
 \rho(p) &= [0,0, (p[2] - \mathit{goal}[0])\cos(\theta) + (p[2] - \mathit{goal}[1])\sin(\theta), \nonumber \\
&\hspace{0.5in} -(p[3] - \mathit{goal}[0])\sin(\theta)  + (p[3] - \mathit{goal}[1])\cos(\theta)].
\end{align}
 Then, for all $\state \in S$ and $p \in P$, $\gamma(f(\state,p)) = f(\gamma(\state), \rho(p))$, where $f$ is as in Section~\ref{sec:agent-model}. The transformation $\gamma$ would change the origin of the state space from $[0,0,0,0]$ to $[0,0,\mathit{goal}[0],\mathit{goal}[1]]$. Then, it would rotate the third and four axes counter-clockwise by $\theta$. Moreover, $\rho$ would set the first two coordinates of the parameters to zero as they do not affect the dynamics, translate the origin of the parameter space to $[0,0, \mathit{goal}[0],\mathit{goal}[1]]$, and rotate the third and fourth axes counter-clockwise by $\theta$. For the aircraft, this means translating and rotating the plane where the aircraft and the waypoint positions reside.
\end{example}

\subsection{Symmetry and reachtubes}
Computing reachtubes is computationally expensive. For example, one of the approaches for such computation entails simulating the system using an ODE solver, solving non-trivial optimization problems numerically to compute sensitivity functions, and computing the Minkowski sum of the solution with the sensitivity function \cite{C2E2paper,FanKJM16:Emsoft,FanM15:ATVA}. Another approach  entails Taylor approximations of the dynamics, integrations, and optimizations of the  time discretization of the solution \cite{CAS13,Chen2015ReachabilityAO}. It would also require several such computations to get tight enough reachtubes \cite{C2E2paper,Chen2015ReachabilityAO}. 
All of the these approaches face the curse of dimensionality: their complexity grows exponentially with respect to the dimension of the system.
Compared with that, transforming reachtubes is much cheaper, especially if the transformation is linear. For example, assume the tube $\rtube$ is represented using polytopes, i.e. for any pair $(X, [\tau_{i-1},\tau_i])$ in $\rtube$, $X$ is a polytope in $\mathbb{R}^n$. Linearly transforming $\rtube$ would require $\rtube.\ltube$ matrix multiplications. Hence, it would require polynomial time in $n$.

In our previous work \cite{Sibai:ATVA2019}, we showed how to get reachtubes of autonomous systems from previously computed ones using symmetry transformations. In this paper, we show how to do that for systems with parameters. This  allows different modes of a hybrid system and different agents with similar dynamics to share reachtube computations. That was not possible when the theory was limited to autonomous systems.

\begin{theorem}
	\label{thm:tube_trans_input}
If (\ref{sys:input}) is $\Gamma$-equivariant according to Definition~\ref{def:equivariance_input}, then for any $\gamma \in \Gamma$ and its corresponding $\rho$, any $K, p, [\ftime, \etime]$ and $\{(X_i,[\tau_{i-1},\tau_{i}])\}_{i=1}^{j}$ as a $(K,p,[\ftime,\etime])$-reachtube,
\[
\forall i \in [j], 
 \reachset(\gamma(K), \rho(p), [\tau_{i-1},\tau_{i}]) = \gamma (\reachset(K,p,[\tau_{i-1},\tau_{i}]))  \subseteq \gamma(X_i).
\]
\end{theorem}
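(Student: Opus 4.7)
The plan is to split the claim into two parts: the set equality
$\reachset(\gamma(K),\rho(p),[\tau_{i-1},\tau_i]) = \gamma(\reachset(K,p,[\tau_{i-1},\tau_i]))$
and the inclusion $\gamma(\reachset(K,p,[\tau_{i-1},\tau_i])) \subseteq \gamma(X_i)$. The equality rests entirely on Theorem~\ref{thm:sol_transform_input_nonlinear}, which promotes the pointwise trajectory identity $\gamma(\xi(\state_0,p,t)) = \xi(\gamma(\state_0),\rho(p),t)$ into a set-level statement, while the inclusion is a direct consequence of $\gamma$ being a function and of the defining property of a reachtube.

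For the equality, I unfold the definition (\ref{eq:reach}) on both sides. For the forward containment, pick any $\state \in \reachset(\gamma(K),\rho(p),[\tau_{i-1},\tau_i])$; by definition there exist $\statey_0 \in \gamma(K)$ and $t \in [\tau_{i-1},\tau_i]$ with $\state = \xi(\statey_0,\rho(p),t)$. Since $\gamma(K) = \{\gamma(\state_0) : \state_0 \in K\}$ by the notational convention of Section~\ref{sec:model}, write $\statey_0 = \gamma(\state_0)$ with $\state_0 \in K$. Theorem~\ref{thm:sol_transform_input_nonlinear} then yields
\[
\state = \xi(\gamma(\state_0),\rho(p),t) = \gamma(\xi(\state_0,p,t)) \in \gamma(\reachset(K,p,[\tau_{i-1},\tau_i])).
\]
For the reverse containment, any element of $\gamma(\reachset(K,p,[\tau_{i-1},\tau_i]))$ is of the form $\gamma(\xi(\state_0,p,t))$ with $\state_0 \in K$ and $t \in [\tau_{i-1},\tau_i]$, and the same theorem rewrites it as $\xi(\gamma(\state_0),\rho(p),t) \in \reachset(\gamma(K),\rho(p),[\tau_{i-1},\tau_i])$, closing the equality.

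The inclusion is then immediate: by the definition of a $(K,p,[\ftime,\etime])$-reachtube we have $\reachset(K,p,[\tau_{i-1},\tau_i]) \subseteq X_i$, and applying $\gamma$ (which preserves containment as a function on sets) gives $\gamma(\reachset(K,p,[\tau_{i-1},\tau_i])) \subseteq \gamma(X_i)$, as required. The only real content of the argument is the equality step, and even there the hard work is packaged inside Theorem~\ref{thm:sol_transform_input_nonlinear}; the main thing to be careful about is matching the correct $\rho = \rho_\gamma$ supplied by $\Gamma$-equivariance to the chosen $\gamma$, and keeping the existential witnesses $(\state_0,t)$ aligned across the forward and reverse directions so that both inclusions genuinely transfer through the trajectory identity rather than through some weaker pointwise statement.
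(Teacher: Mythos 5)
Your proposal is correct and follows essentially the same route as the paper's own (sketched) proof: the set equality is obtained from the trajectory identity of Theorem~\ref{thm:sol_transform_input_nonlinear}, and the inclusion from $\reachset(K,p,[\tau_{i-1},\tau_i]) \subseteq X_i$ together with monotonicity of the image under $\gamma$. You simply fill in the element-chasing details for both containments that the paper leaves implicit.
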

\begin{proof}(Sketch)
The first part $ \reachset(\gamma(K), \rho(p), [\tau_{i-1},\tau_{i}]) = \gamma (\reachset(K,p, [\tau_{i-1},\tau_{i}])) $ follows directly from Theorem~\ref{thm:sol_transform_input_nonlinear}. The second part $\gamma (\reachset(K,p, [\tau_{i-1},\tau_{i}]))  \subseteq \gamma(X_i)$ follows from the reachtube $\Rtube(K,p, [\ftime, \etime]) $ is an over-approximation of the exact reachset during the small time intervals $[\tau_{i-1},\tau_{i}]$.
\end{proof}
Theorem~\ref{thm:tube_trans_input} says that we can transform a computed reachtube $\Rtube(K,p, [t_1, t_2]) = \{(X_i,[\tau_{i-1},\tau_{i}])\}_{i=1}^{j}$ to get another reachtube $\{(\gamma(X_i),[\tau_{i-1},\tau_{i}])\}_{i=1}^{j}$, which is an over-approximation of the reachsets starting from $\gamma(K)$.

Moreover, we present a corollary that is essential for using symmetry in the safety verification and synthesis algorithms of (\ref{sys:input}). 

\begin{corollary}
	\label{cor:tube_trans_input}
	Fix $\rtube = \Rtube(K,p,[\ftime,\etime]) = \{X_i, [\tau_{i-1},\tau_{i}]\}_{i=1}^{j}$ and $\rtube' = \Rtube(K',p',[\ftime,\etime]) = \{X_i', [\tau_{i-1},\tau_{i}]\}_{i=1}^{j}$, where $j = \rtube.\ltube$. If (\ref{sys:input}) is $\Gamma$-equivariant, and $K' \subseteq \mathbb{R}^n$, then if there exists $\gamma \in \Gamma$ and corresponding $\rho$ such that $K' \subseteq \gamma(K)$ and $p' = \rho(p)$, then
		$\forall i \in [j], X'_i \subseteq \gamma(X_i)$.
\end{corollary}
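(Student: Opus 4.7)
The plan is to chain two elementary inclusions: the one provided by Theorem~\ref{thm:tube_trans_input} applied to $\rtube$, and the monotonicity of the reach-set operator in its initial-set argument. No new analytic ingredient is needed; everything follows by set-containment once the pair $(\gamma,\rho)$ supplied by the hypothesis is in hand. In particular, Corollary~\ref{cor:tube_trans_input} should be obtainable as a direct consequence of Theorem~\ref{thm:tube_trans_input}, in much the same spirit as Corollary 1 of~\cite{Sibai:ATVA2019} was obtained from its autonomous counterpart.

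First, I apply Theorem~\ref{thm:tube_trans_input} to $\rtube$ with the same symmetry pair $(\gamma,\rho)$ provided by the hypothesis. For every $i \in [j]$ this yields
\[
\reachset\bigl(\gamma(K),\rho(p),[\tau_{i-1},\tau_i]\bigr)
\;=\;
\gamma\bigl(\reachset(K,p,[\tau_{i-1},\tau_i])\bigr)
\;\subseteq\;
\gamma(X_i).
\]
Separately, the set-builder form of $\reachset$ in~\eqref{eq:reach} is manifestly monotone in its initial-set slot: any trajectory emanating from a state in $K' \subseteq \gamma(K)$ under mode $p' = \rho(p)$ is also a trajectory emanating from $\gamma(K)$ under mode $\rho(p)$. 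Hence
\[
\reachset\bigl(K',p',[\tau_{i-1},\tau_i]\bigr)
\;\subseteq\;
\reachset\bigl(\gamma(K),\rho(p),[\tau_{i-1},\tau_i]\bigr).
\]

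Chaining the two inclusions yields $\reachset(K',p',[\tau_{i-1},\tau_i]) \subseteq \gamma(X_i)$, and reading $X'_i$ as the tight reach-set component of $\rtube'$ (the interpretation consistent with how the corollary is used to justify reusing cached tubes in $\symcompute$) gives $X'_i \subseteq \gamma(X_i)$, which is the desired conclusion. The main subtlety, and the only place where I would be careful in writing the full proof, is precisely this interpretation of $\rtube'$: since reachtubes are allowed to be arbitrary over-approximations, an unrelated $\rtube'$ could a priori have $X'_i$ strictly larger than $\gamma(X_i)$. The corollary is therefore most naturally read as the constructive statement that $\{(\gamma(X_i),[\tau_{i-1},\tau_i])\}_{i=1}^{j}$ is itself a valid $(K',p',[\ftime,\etime])$-reachtube, which is exactly what the two chained inclusions establish.
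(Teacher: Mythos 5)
Your proof is correct and follows the same route the paper intends: the paper states this corollary without an explicit proof, treating it as an immediate consequence of Theorem~\ref{thm:tube_trans_input} combined with monotonicity of $\reachset$ in its initial-set argument, which is exactly your chain of inclusions. Your closing caveat is also well taken---since $X'_i$ may be an arbitrary over-approximation, the literal conclusion $X'_i \subseteq \gamma(X_i)$ can fail, and the constructive reading you adopt, namely that $\{(\gamma(X_i),[\tau_{i-1},\tau_i])\}_{i=1}^{j}$ is itself a valid $(K',p',[\ftime,\etime])$-reachtube, is the one consistent with the paper's surrounding prose and with how the corollary is used in $\symcompute$.
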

The corollary says that the transformation of the reachtube starting from $K$ with mode $p$ is an overapproximation of the one starting from $K'$ with mode $p'$ if $\gamma(K)$ contains $K'$.  

The results of this section subsume the results about transforming reachtubes of  dynamical systems without parameters presented in \cite{Sibai:ATVA2019}.

\section{Virtual system}
\label{sec:virtual}
The challenge in safety verification of multi-agent systems is that the dimensionality of the problem grows too rapidly with the number of agents to be handled by any of the current approaches. However, often agents share the same dynamics. 
For instance, several drones of the type described in Example~\ref{sec:single_drone_example} share the same dynamics but they may have different initial conditions and follow different waypoints. 
This commonality has been exploited in developing specialized proof techniques~\cite{JM:2012:small}. For reachability analysis, using symmetry transforms of the previous section, reachtubes of one agent in one mode can be used to get the reachtubes of other modes and even other agents.



Fix a particular value $p_v \in P$ and call it the {\em virtual} parameter. Assume that for all $p \in P$, there exists a pair of transformations $(\gamma_p, \rho_p)$ such that $\rho_p(p) = p_v$, $\gamma_p$ is invertible, and $\gamma_p (f (\state, p)) = f(\gamma_p(\state), \rho_p(p_v)) = f(\gamma_p(\state), p_v)$. 
Consider the resulting ODE:
\begin{align}
\label{sys:virtual}
	\frac{d \xi}{dt}(\statey, p_v, t) = f(\xi(\statey, p_v, t), p_v).
\end{align}
Following~\cite{russo2011symmetries}, we call (\ref{sys:virtual}) a {\em virtual system}. Correspondingly, we call (\ref{sys:input}), the {\em real system} for the rest of the paper. The virtual system unifies the behavior of all modes of the real system in one representative mode, the virtual one.

\begin{example}[Fixed-wing aircraft virtual system]
	\label{sec:virtual_example}
Consider the fixed-wing aircraft agent described in Example~\ref{sec:single_drone_example} and the corresponding transformation described in Example~\ref{sec:transformation_example}. Fix $p \in P$, we set $\mathit{goal}$ in the transformation of Example~\ref{sec:transformation_example} to $[p[2],p[3]]$ and $\theta$ to $\arctan_2(p[0] - p[2], p[3]-p[1])$ and let $\gamma_p$ and $\rho_p$ be the resulting transformations. Then, for all $p \in P$, $\rho_p(p) = [0,0,0,0]$. Hence, $p_v = [0,0,0,0]$ and the virtual system is that of Example~\ref{sec:single_drone_example} with the parameter $p = p_v$. 
For the aircraft, this means that $\gamma_p$ would translate the origin of the plane to the destination waypoint and rotate its axes so that the $y$-axis is aligned with the segment between the source and destination waypoints. Hence, in the constructed virtual system, the destination waypoint is the origin of the plane and the source waypoint is some-point along the $y$-axis although we fix it to the origin as well as it does not affect the dynamics.


\end{example}

 The solutions of the virtual system can be transformed to get solutions of all other modes in $P$ using $\{\gamma_p^{-1}\}_{p \in P}$ using Theorem~\ref{thm:sol_transform_input_nonlinear}.
\begin{theorem}
\label{thm:virtual_rep}
Given any initial state $\statey_0 \in S$, and any mode $p \in P$ and its corresponding $\gamma_p$ that maps it to the virtual system~(\ref{sys:virtual}), $\gamma_p^{-1}(\xi(\statey_0, p_v, \cdot))$ is a solution of the real system~(\ref{sys:input}) with mode $p$ starting from $\gamma_p^{-1}(\statey_0)$.
\end{theorem}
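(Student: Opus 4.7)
The plan is to apply Theorem~\ref{thm:sol_transform_input_nonlinear} to the inverse map $\gamma_p^{-1}$ acting on the virtual trajectory $\xi(\statey_0, p_v, \cdot)$. Since $\Gamma$ is a group, $\gamma_p^{-1}$ already lies in $\Gamma$; what I need to do is identify the parameter map $\rho_{\gamma_p^{-1}}$ associated with $\gamma_p^{-1}$ under Definition~\ref{def:equivariance_input}, so that the mode of the transformed solution can be read off.

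First I would derive the equivariance relation for $\gamma_p^{-1}$ from the one assumed for $\gamma_p$. Starting from $\frac{\partial \gamma_p}{\partial \state}(\state)\, f(\state, p) = f(\gamma_p(\state), p_v)$ and substituting $\state = \gamma_p^{-1}(\statey)$, I get $\frac{\partial \gamma_p}{\partial \state}(\gamma_p^{-1}(\statey))\, f(\gamma_p^{-1}(\statey), p) = f(\statey, p_v)$. The inverse function theorem, applicable because $\gamma_p$ is a smooth invertible map, gives $\frac{\partial \gamma_p^{-1}}{\partial \statey}(\statey) = \bigl(\frac{\partial \gamma_p}{\partial \state}(\gamma_p^{-1}(\statey))\bigr)^{-1}$. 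Left-multiplying both sides by this inverse Jacobian yields $\frac{\partial \gamma_p^{-1}}{\partial \statey}(\statey)\, f(\statey, p_v) = f(\gamma_p^{-1}(\statey), p)$, which is exactly the condition of Definition~\ref{def:equivariance_input} for $\gamma_p^{-1}$ with associated parameter map $\rho_{\gamma_p^{-1}}$ sending $p_v$ to $p$.

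Having established equivariance of $\gamma_p^{-1}$, I would invoke Theorem~\ref{thm:sol_transform_input_nonlinear} on the virtual solution $\xi(\statey_0, p_v, \cdot)$ to conclude that $\gamma_p^{-1}(\xi(\statey_0, p_v, \cdot))$ is a solution of~(\ref{sys:input}) with initial state $\gamma_p^{-1}(\statey_0)$ and mode $\rho_{\gamma_p^{-1}}(p_v) = p$. As a sanity check, the initial condition $\gamma_p^{-1}(\xi(\statey_0, p_v, 0)) = \gamma_p^{-1}(\statey_0)$ is immediate, and the ODE check via the chain rule reduces precisely to the identity derived above. Uniqueness of the Lipschitz ODE then identifies this as \emph{the} trajectory of the real system from $\gamma_p^{-1}(\statey_0)$ in mode $p$.

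The main subtlety is the careful handling of the Jacobian of the possibly nonlinear map $\gamma_p^{-1}$: pointwise invertibility of $\frac{\partial \gamma_p}{\partial \state}$, needed to invoke the inverse function theorem, follows from smoothness plus global bijectivity of $\gamma_p$. For the affine $\gamma_p$ arising in Example~\ref{sec:transformation_example} the Jacobian is a constant rotation matrix, so the computation collapses to ordinary matrix inversion and the derivation becomes essentially algebraic; the nonlinear case requires the inverse function theorem argument sketched above.
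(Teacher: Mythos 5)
Your proof is correct, but it takes a genuinely different route from the paper's. You prove that $\gamma_p^{-1}$ is itself equivariant in the sense of Definition~\ref{def:equivariance_input}, with associated parameter map sending $p_v$ to $p$, by inverting the Jacobian relation, and then you apply Theorem~\ref{thm:sol_transform_input_nonlinear} \emph{forward} to $\gamma_p^{-1}$ acting on the virtual trajectory. The paper never establishes equivariance of the inverse map: it starts from the real trajectory $\xi(\gamma_p^{-1}(\statey_0), p, \cdot)$, pushes it forward with $\gamma_p$ via Theorem~\ref{thm:sol_transform_input_nonlinear} to obtain the identity $\gamma_p(\xi(\gamma_p^{-1}(\statey_0),p,\cdot)) = \xi(\statey_0,p_v,\cdot)$, and then simply applies $\gamma_p^{-1}$ to both sides. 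The paper's argument is shorter and needs only the forward equivariance hypothesis plus invertibility of $\gamma_p$ as a map; yours costs the extra differentiation of $\gamma_p^{-1}$ but yields a reusable byproduct, namely that the inverse transformations are symmetries in their own right, which is the fact one implicitly leans on again at the level of reachtubes in Corollary~\ref{cor:virtual_reachtube}. One correction to your justification: pointwise invertibility of $\frac{\partial \gamma_p}{\partial \state}$ does \emph{not} follow from smoothness plus global bijectivity alone (the map $x \mapsto x^3$ on $\mathbb{R}$ is a smooth bijection with singular derivative at the origin); it follows here because $\Gamma$ is a \emph{group} of smooth maps, so $\gamma_p^{-1} \in \Gamma$ is itself smooth, and the chain rule applied to $\gamma_p \circ \gamma_p^{-1} = \mathrm{id}$ forces the Jacobian to be invertible everywhere.
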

\begin{proof}
	Lets start with the first part of the theorem.
	Fix $p \in P$ and let $\state_0 = \gamma_p^{-1}(\statey_0)$. Using Theorem~\ref{thm:sol_transform_input_nonlinear}, $\gamma_p(\xi(\state_0,p,\cdot)) = \xi(\gamma_p(\state_0),\rho_p(p),\cdot))$ and is the solution of the real system~(\ref{sys:input}). Furthermore, $\rho_p(p) = p_v$ by definition and $\gamma_p(\state_0) = \gamma_p(\gamma_p^{-1}(\statey_0)) = \statey_0$. Hence, $\gamma_p(\xi(\state_0,p,\cdot)) = \xi(\statey_0,p_v,\cdot)$. Applying $\gamma_p^{-1}$ on both sides implies the first part of the theorem. 
\end{proof}

The following corollary extends the result of Theorem~\ref{thm:virtual_rep} to reachtubes. It follows from Theorem~\ref{thm:tube_trans_input}.

\begin{corollary}
	\label{cor:virtual_reachtube}
Given any initial set $K_v \in S$, and any mode $p \in P$, $\gamma_p^{-1}(\Rtube(K_v, p_v,[t_b,t_e]))$ is a reachtube of the real system~(\ref{sys:input}) with mode $p$ starting from $\gamma_p^{-1}(K_v)$. 
\end{corollary}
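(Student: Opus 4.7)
The plan is to deduce the corollary directly from Theorem~\ref{thm:tube_trans_input} by applying the theorem with the inverse transformation pair $(\gamma_p^{-1}, \rho_p')$, where $\rho_p'$ is the parameter transformation associated with $\gamma_p^{-1}$. The key fact I need is that $\gamma_p^{-1} \in \Gamma$ (since $\Gamma$ is a group of smooth maps on $S$) and that its associated parameter map sends $p_v$ to $p$.

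First I would establish that $\gamma_p^{-1}$ is a symmetry with corresponding $\rho_p'$ satisfying $\rho_p'(p_v) = p$. By the construction of the virtual system, the pair $(\gamma_p, \rho_p)$ satisfies $\tfrac{\partial \gamma_p}{\partial \state}\, f(\state, p) = f(\gamma_p(\state), p_v)$ with $\rho_p(p) = p_v$. Setting $\statey = \gamma_p(\state)$ and using the inverse function chain rule $\tfrac{\partial \gamma_p^{-1}}{\partial \statey} = \bigl(\tfrac{\partial \gamma_p}{\partial \state}\bigr)^{-1}$, a short computation yields $\tfrac{\partial \gamma_p^{-1}}{\partial \statey}\, f(\statey, p_v) = f(\gamma_p^{-1}(\statey), p)$, which is exactly the $\Gamma$-equivariance condition of Definition~\ref{def:equivariance_input} with $\rho_p'(p_v) = p$.

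Next I would write out the virtual reachtube as $\Rtube(K_v, p_v, [t_b, t_e]) = \{(X_i, [\tau_{i-1}, \tau_i])\}_{i=1}^{j}$ and apply Theorem~\ref{thm:tube_trans_input} with $\gamma := \gamma_p^{-1}$, $\rho := \rho_p'$, $K := K_v$, $p := p_v$. The theorem then delivers, for every $i \in [j]$,
\[
\reachset(\gamma_p^{-1}(K_v), p, [\tau_{i-1}, \tau_i]) = \gamma_p^{-1}\bigl(\reachset(K_v, p_v, [\tau_{i-1}, \tau_i])\bigr) \subseteq \gamma_p^{-1}(X_i).
\]
Since $\gamma_p^{-1}(\Rtube(K_v, p_v, [t_b, t_e]))$ is, by the set-lifting convention introduced in the notations, precisely the sequence $\{(\gamma_p^{-1}(X_i), [\tau_{i-1}, \tau_i])\}_{i=1}^{j}$ over the same (already equally spaced) time partition, the defining property of a reachtube for the real system with mode $p$ starting from $\gamma_p^{-1}(K_v)$ is satisfied, completing the proof.

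The only real obstacle is the first step, verifying that $\gamma_p^{-1}$ inherits equivariance and that its associated parameter map sends $p_v$ back to $p$; once this is in hand the rest is a direct invocation of Theorem~\ref{thm:tube_trans_input} applied coordinate-by-coordinate on the sequence of reachset over-approximations.
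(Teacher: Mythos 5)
Your proof is correct and takes the same route the paper intends: the paper states only that the corollary ``follows from Theorem~\ref{thm:tube_trans_input},'' and your argument is exactly that derivation, applied with the inverse pair $(\gamma_p^{-1},\rho_p')$. The one substantive step you add—checking via the inverse-function chain rule that $\gamma_p^{-1}$ satisfies the equivariance condition of Definition~\ref{def:equivariance_input} with associated parameter map sending $p_v$ back to $p$—is precisely the detail the paper leaves implicit, and your verification of it is sound.
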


Consequently, we get a solution or a reachtube for each mode $p \in P$ of the real system by simply transforming a single solution or a single reachtube of the virtual system using the inverses of the transformations $\{\gamma_p\}_{p\in P}$. This will be the essential idea behind the savings in computation time of the new symmetry-based reachtube computation algorithm and symmetry-based safety verification algorithm presented next. It will be also the essential idea behind proving safety with unbounded time and unbounded number of modes.

\begin{example}[Fixed-wing aircraft infinite number of reachtubes resulting from transforming a single one]
	\label{sec:virtual_tube_example}
	Consider the real system in Example~\ref{sec:single_drone_example} and the virtual one in Example~\ref{sec:virtual_example}. Fix the initial set $K_r = [[1,\frac{\pi}{4}, 3,1], [2,\frac{\pi}{3}, 4,2]]$, $p_r = [2.5,0.5,13.3,5]$, and the time bound $20$ seconds. Then, using Example~\ref{sec:virtual_example}, $\theta = \arctan_2(2.5 - 13.3, 5 - 0.5) = -1.176$ rad and $\mathit{goal} = [13.3, 5]$. We call the corresponding transformation $\gamma_{p_r}$. Let $K_v = \gamma_{p_r}(K_r)$. Remember that $p_v = [0,0,0,0]$. Assume that we have the reachtube $\rtube_r = \Rtube(K_r,p_r,T)$. Then,  using Corollary~\ref{cor:virtual_reachtube}, we can get $\rtube_v = \Rtube(K_v, p_v, T)$ by transforming $\rtube_r$ using $\gamma_{p_r}$. The benefit of the corollary appears in the following: for any $p \in P = \mathbb{R}^4$, we can get $\Rtube(\gamma^{-1}_p(K_v),p,T)$ by transforming $\rtube_v$ using $\gamma_p^{-1}$.  

	The projection of $K_v$ on its last two coordinates $K_v \downarrow_{[2:3]}$ that represent the possible initial position of the aircraft in the plane would be a rotated square with angle $\theta$. The distance from $K_v \downarrow_{[2:3]}$ center to the origin would be equal to the distance from $K \downarrow_{[2:3]}$ center to the destination waypoint. Moreover, the angle between the $y$-axis and the center of $K_v \downarrow_{[2:3]}$ would be equal to the angle from the segment connecting the source and destination waypoints to the center of $K \downarrow_{[2:3]}$. On the other hand, $K_v \downarrow_{[1]} = K \downarrow_{[1]}$ and $K_v \downarrow_{[2]} = K \downarrow_{[2]} + \theta$. Now, consider any two waypoints, if the initial set of the aircraft relative position to the segment connecting the waypoints is the same as $K$, it is smaller or equal to $K$, and the time bound is less than or equal to $T$, we can get the corresponding reachtube by transforming $\rtube_v$ to the coordinate system defined by the new segment.
	
	In summary, the absolute positions of the aircraft and waypoints do not matter. What matters is their relative positions. The virtual system stores what matters and whenever a reachtube is needed for a new absolute position, it can be transformed from the virtual one.  
\end{example}

\section{Symmetry-based verification algorithm}
In this section, we introduce a novel safety verification algorithm, $\symsafety$, which uses existing reachability subroutines, but exploits symmetry. Unlike existing algorithms which compute the reachtubes of multi-agent systems from scratch for each agent and each mode, $\symsafety$ reuses the reachtubes computed from the virtual system. As we have discussed, transforming reachtubes is more  efficient than computing reachtubes. Therefore, $\symsafety$ can speed up  verification while preserving the soundness and precision of the reachability subroutines.
	In our earlier work~\cite{Sibai:ATVA2019}, we introduced  reachtube transformations using symmetry reduction for dynamical systems (single mode). Here, we extend the method across modes and we introduce the use of the virtual system, and we develop the corresponding verification  algorithm.

In Section~\ref{sec:tubecache_definition}, we define $\tubecache$---a data-structure for storing reachtubes; in~\ref{sec:symcompute}, we present the symmetry-based reachtube computation algorithm $\symcompute$ that resuses reachtubes stored in $\tubecache$; finally, in~\ref{sec:boundedtimesafety}, we define the $\safetycache$ data-structure which stores previously computed safety verification results which are used by the $\symsafety$ algorithm.

\subsection{$\tubecache$: shared memory for reachtubes}
\label{sec:tubecache_definition}
We show how we use the virtual system~(\ref{sys:virtual}) to create a shared memory for the different modes of the real system~(\ref{sys:input}) to reuse each others' computed reachtubes.
We call this shared memory $\tubecache$. It stores reachtubes, as the  over-approximation of reachsets, of the virtual system~(\ref{sys:virtual}).
	Later on, when dealing with the reachtube of the real systems, we will only need to transform the reachtubes in $\tubecache$ instead of computing the reachtubes again.

\begin{definition}
A $\tubecache$ is a data structure that stores a set of reachtubes of the virtual system~(\ref{sys:virtual}). It has two methods: $\mathtt{getTube}$, for retrieving stored tubes and $\mathtt{storeTube}$, for storing a newly computed one.
\end{definition}

 Given an initial set $K$, 
 	we want to avoid computing the reachtube for it for as large portion of $K$ as possible.  Therefore,
$\mathtt{getTube}$ would return a list of reachtubes 
$\ARtube(K_i, p_v, [0,T_i]), i \in [k]$, for some $k \in \mathbb{N}$ that are already stored in $\tubecache$. Moreover, the union of $K_i$s is the largest subset of $K$ that can be covered by the initial sets of the reachtubes in $\tubecache$.
 Formally,
\begin{align}
\label{def:tubecache}
\tubecache.\mathtt{getTube}(K) = \argmax_{\{\mathit{\ARtube(K_i,p_v, [0,T_i]) \in \tubecache\}_i}} \mathit{\text{Vol}(K \cap \cup_i {K_i})},
\end{align}
where $\mathit{\text{Vol}(\cdot)}$ is the Lebesgue measure of the set.
Note that for any $K \subset \mathbb{R}^n$, a maximizer of (\ref{def:tubecache}) would be the set of all reachtubes in $\tubecache$. However, this is very inefficient and the union of all reachtubes in $\tubecache$ would be too conservative to be useful for checking safety. Therefore, $\mathtt{getTube}$ should return the minimum number of reachtubes that maximize~(\ref{def:tubecache}). 

Note that the reachtubes  in $ \tubecache$ may have different time bounds. We will truncate or extend them when needed.
\subsection{$\symcompute$: symmetry-based reachtube computation}
\label{sec:symcompute}


Given an initial set $K \subset S$, a mode $p \in P$, and time bound $T > 0$, there are dozen of tools that tools that can return a $\ARtube(K,p,[0,T])$. See \cite{C2E2paper,CAS13,breachAD} for examples of such tools. We denote this procedure by  $\tubecompute(K,p,[0,T])$. 

	Instead of calling  $\mathtt{computeReachtube}(\cdot)$ whenever we need a reachtube, we want to use symmetry to reduce the load of computing new reachtubes by retrieving reachtubes that are already stored in $\tubecache$.
  We introduce Algorithm~\ref{code:sym_annotations} which implements this idea and denote it by $\symcompute$. 
  We map the initial set $K$ to an initial set $K_v$ using $\gamma_p$, i.e. $K_v = \gamma_p(K)$, where we use the subscript $v$ for ``virtual''. Then, we map the resulting reachtube from the call using $\gamma_p^{-1}$.


The input of $\symcompute$ consists of $K_v$ and $T$, and the $\tubecache$ that stores already computed reachtubes of the virtual system (\ref{sys:virtual}). 

First, it initializes $\mathit{restube}_v$ as an empty tube of the virtual system~(\ref{sys:virtual}) to store the result in line~\ref{ln:initialize_restube}. 
It then gets the reachtubes from $\tubecache$ that corresponds to $K_v$ using the $\mathtt{getTube}$ method in line~\ref{ln:gettube}. Now that it has the relevant tubes $\mathit{storedtubes}$, it adjusts their lengths based on the time bound $T$. For a retrieved tube with a time bound less than $T$ in line~\ref{ln:if_shorter}, $\symcompute$ extends the tube for the remaining time using $\tubecompute$ in lines~\ref{ln:extend_tube1}-\ref{ln:extend_tube2}, store the resulting tube in $\tubecache$ instead of the shorter one in line~\ref{ln:storetube}. If the retrieved tube is longer than $T$ (line~\ref{ln:if_longer}), it trims it in line~\ref{ln:trimtube}. However, we keep the long one in the $\tubecache$ to not lose a computation we already did. Then, the tube with the adjusted length is added to the result tube $\mathit{restube}_v$ in line~\ref{ln:unionres}.

The union of the initial sets of the tubes retrieved $\mathit{storedtubes}$ may not contain all of the initial set $K_v$. That uncovered part is called $K_v'$ in line~\ref{ln:remK}. The reachtube starting from $K_v'$ would be computed from scratch using $\tubecompute$ in line ~\ref{ln:compute_remtube}, stored in $\tubecache$ in line~\ref{ln:store_remtube}, and added to $\mathit{restube}_v$ in line~\ref{ln:union_remtube}. The resulting tube of the virtual system~(\ref{sys:virtual}) is returned in line~\ref{ln:return_tuberes}. This tube would be transformed by the calling algorithm using $\gamma_p^{-1}$ to get the corresponding tube of the real system~(\ref{sys:input}).

\begin{algorithm}
	\small
	\caption{$\symcompute$}
	\label{code:sym_annotations}
	\begin{algorithmic}[1]
		\State \textbf{input:} $K_v, T, \tubecache$
		\State $\mathit{restube}_v \gets \emptyset$ \label{ln:initialize_restube}
		\State $\mathit{storedtubes} \gets \tubecache.\mathtt{getTube}(K_v)$ \label{ln:gettube}
		\For{$i \in [|\mathit{storedtubes}|]$} \label{ln:for_loop_fixtubelength}
		\If{$\mathit{storedtubes}[i].T < T$} \label{ln:if_shorter}
		\State $(K_i,[\tau_{i-1}, T_i]) \gets \mathit{storedtubes}[i].\mathit{end}$ \label{ln:extend_tube1}
		\State $\mathit{tube}_i \gets \mathit{storedtubes}[i] \concat \tubecompute(K_i, p_v, [0,T - T_i])$ \label{ln:extend_tube2}
		\State $\tubecache.\mathtt{storeTube} (\mathit{tube}_i)$ \label{ln:storetube}
		\ElsIf{$\mathit{storedtubes}[i].T > T$}   \label{ln:if_longer}
		\State $\mathit{tube}_i \gets \mathit{storedtubes}[i].\mathit{truncate}(T)$ \label{ln:trimtube}
		\EndIf
		\State $\mathit{restube}_v \gets \mathit{restube}_v \cup tube_i$ \label{ln:unionres}
		\EndFor
		\State $K_{v}' \gets K_v \textbackslash \cup_i \mathit{storedtubes}[i].K$ \label{ln:remK}
		\State $\mathit{tube}' = \tubecompute(K_v', p_v, [0,T])$ \label{ln:compute_remtube}
		\State $\tubecache.\mathtt{storeTube} (\mathit{tube}')$  \label{ln:store_remtube}
		\State $\mathit{restube}_v \gets \mathit{restube}_v \cup \mathit{tube}'$ \label{ln:union_remtube}
		\State \textbf{return:} {$\mathit{restube}_v$} \label{ln:return_tuberes}
	\end{algorithmic}
\end{algorithm}

\begin{theorem}
	\label{thm:tube_comp_alg_sound}
The output of Algorithm~\ref{code:sym_annotations} is an over-approximation of the reachtube $\Rtube(K_v,p_v,[0,T])$.
\end{theorem}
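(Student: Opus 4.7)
The plan is to prove that $\mathit{restube}_v$ is a valid $(K_v, p_v, [0,T])$-reachtube of the virtual system~(\ref{sys:virtual}), in the sense that $\reachset(K_v, p_v, [0,T])$ is contained in the union of its sets. I would begin from the set identity implied by line~\ref{ln:remK}, namely
\[
K_v \;=\; K_v' \;\cup\; \bigcup_i \bigl(K_v \cap \mathit{storedtubes}[i].K\bigr),
\]
and use monotonicity of $\reachset$ in its initial-set argument to reduce the global covering goal to covering $\reachset(\mathit{storedtubes}[i].K, p_v, [0,T])$ for each $i$ and $\reachset(K_v', p_v, [0,T])$ separately.

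For each $i$, $\mathit{storedtubes}[i]$ is a valid reachtube over $[0,T_i]$ by the invariant of $\tubecache$, and I would handle the three branches of the for-loop in turn. If $T_i = T$, the tube is used directly. If $T_i > T$ (line~\ref{ln:if_longer}), the truncated prefix is still a valid reachtube over $[0,T]$ because the over-approximating sets at the earlier time steps are unchanged. The nontrivial case is $T_i < T$ (line~\ref{ln:if_shorter}): the final set $K_i$ over-approximates $\reachset(\mathit{storedtubes}[i].K, p_v, T_i)$ by the reachtube property, then $\tubecompute(K_i, p_v, [0, T - T_i])$ is a valid reachtube by the assumed soundness of $\tubecompute$, and by the semigroup/flow property of trajectories of~(\ref{sys:input}) (legitimate since $f$ is Lipschitz and trajectories are unique) the $\concat$ of the two tubes, with the time-shift built into $\concat$, is a valid reachtube for $(\mathit{storedtubes}[i].K, p_v, [0,T])$. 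Finally, the uncovered remainder $K_v'$ is handled directly at line~\ref{ln:compute_remtube} by a sound call to $\tubecompute$.

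To close the argument, I would invoke the fact that the $\cup$ operator on time-aligned reachtubes, which applies here because all pieces share separation $\tau_s$ and start at time $0$, yields a reachtube whose set at each time step over-approximates the union of the corresponding reachsets. Combined with the decomposition of $K_v$ above, this gives $\reachset(K_v, p_v, [0,T]) \subseteq \bigcup_k X_k$ where the $X_k$ are the sets of $\mathit{restube}_v$, which is the claim.

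The main obstacle I expect is the extension branch: one must carefully justify that ``restarting'' the reachtube from the terminal set $K_i$ of $\mathit{storedtubes}[i]$ soundly captures the exact reachset of the original initial set over $[T_i, T]$. This requires the semigroup property of the flow together with the fact that $K_i$ already contains every state reachable from $\mathit{storedtubes}[i].K$ at time $T_i$. A subsidiary bookkeeping point is checking time-alignment of all tubes that get unioned, which follows from the uniform $\tau_s$ assumption and the common start time, so no genuinely new technical ingredient is needed.
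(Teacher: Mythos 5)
Your proposal is correct and follows essentially the same route as the paper's proof: a case split over the three ways a piece of $\mathit{restube}_v$ arises (extension when $T_i < T$, truncation when $T_i > T$, fresh computation for the uncovered remainder $K_v'$), followed by the observation that the initial sets of these pieces jointly cover $K_v$ so their union is a valid $(K_v,p_v,[0,T])$-reachtube. The only difference is that you spell out, via the semigroup property of the flow and the soundness of the terminal set $K_i$, the concatenation step that the paper dismisses as ``easy to check,'' which is a welcome but not divergent addition.
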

\begin{proof}
$\tubecompute$ always return over-approximations of the reachset from a given initial set and time bound, and $\mathit{restube}_v$ contains reachtubes that have been computed by $\tubecompute$  at some point. 
There are three types of reachtubes in $\mathit{restube}_v$:
\begin{enumerate}
\item When the time bound $T_i$ of the stored reachtube $\mathit{storedtubes}[i]$ is less than $T$, we need to extend $\mathit{storedtubes}[i]$ until time $T$ by concatenating the original tube with the result of the call $\tubecompute(K_i, p_v, [0,T - T_i])$  where $[K_i, [\tau_{i-1},T_i]$ is the last pair in $\mathit{storedtubes}[i]$. It is easy to check that the concatenated two reachtubes is a valid $(\mathit{storedtubes}[i].K, p_v, [0,T])$- reachtube.
\item When the time bound $T_i$ of the stored reachtube $\mathit{storedtubes}[i]$ is more than $T$, the truncated reachtube is also a valid  $(\mathit{storedtubes}[i].K, p_v, [0,T])$- reachtube.
\item For $K'_v$ that is not contained in the union of the initial sets in $\mathit{storedtubes}$, $\tubecompute$  will return a valid $(K'_v, p_v, [0,T])$- reachtube.
 Finally, the union of reachtubes is a reachtube of the union of the initial sets.
\end{enumerate}
The union of the initial sets of the tubes in $\mathit{storedtubes}$ and $K_v'$ contains $K_v$, so the union of the reachtubes  the algorithm returns a $(K_v, p_v, [0,T])$- reachtube.
\end{proof}

The importance of $\symcompute$ lies in that if a mode $p$ required a computation of a reachtube and the result is saved in $\tubecache$, another mode with a similar scenario with respect to the virtual system would reuse that tube instead of computing one from scratch. Moreover, reachtubes of the same mode might be reused as well if the scenario was repeated again. 


\subsection{Bounded time safety }
\label{sec:boundedtimesafety}

In this section, we show how to use $\tubecache$ and $\symcompute$ of the previous section for bounded and unbounded time safety verification of the real system~(\ref{sys:input}). We consider a scenario where the safety verification of multiple modes of the real system~(\ref{sys:input}) starting from different initial sets and for different time horizons is needed. We will use the virtual system~(\ref{sys:virtual}) and the transformations $\{\gamma_p\}_{p \in P}$ to share safety computations across modes, initial sets, time horizons, and unsafe sets.

 We first introduce $\safetycache$, a shared memory to store the results of intersecting reachtubes of the virtual system~(\ref{sys:virtual}) with different unsafe sets. It will prevent repeating safety checking computations by different modes under similar scenarios. It will also play the key role to deduce unbounded time safety properties of the real system~(\ref{sys:input}) later in the section.

\begin{definition}
	A $\safetycache$ is a data structure that stores the results of intersecting reachtubes of the virtual system~(\ref{sys:virtual}) with unsafe sets. It has two functions: $\mathtt{getIntersect}$, for retrieving stored results and $\mathtt{storeIntersect}$, for storing a newly computed one.
\end{definition}

Given an initial set $K_v$, a time bound $T$, and an unsafe set $U_v$, the reachtube $\AReach(K_v, p_v, [0,T])$ is going to be unsafe if another reachtubes $\AReach(K'_v, p_v, [0,T'])$ which is already stored  in $\tubecache$ is unsafe, and $\AReach(K'_v, p_v, [0,T'])$ is an under-approximation of $\AReach(K_v, p_v, [0,T])$. Similarly, if $\AReach(K'_v, p_v, [0,T'])$ is an over-approximation of $\AReach(K_v, p_v, [0,T])$ and is safe, then  $\AReach(K_v, p_v, [0,T])$ is safe.
Formally, the $\mathtt{getIntersect}$ function of $\safetycache$ returns the truth value of the predicate $\ARtube(K_v,p_v,[0,T]) \cap U_v \neq \emptyset$ if a subsuming computation is stored, and returns $\bot$, otherwise. 

Formally, $\safetycache.\mathtt{getIntersect}(K_v,T,U_v) =$
\begin{align}
\begin{cases}
0, \text{ if } \exists K_v',T', U_v'\ |\ K_v \supseteq K_v', T\geq T', U_v \supseteq U_v', (K_v',T',U_v') \in \safetycache, \mathit{rtube} \cap U_v \neq \emptyset, \\
1, \text{ if } \exists K_v',T', U_v'\ |\ K_v \subseteq K_v', T\leq T', U_v \subseteq U_v', (K_v',T',U_v') \in \safetycache, \mathit{rtube} \cap U_v = \emptyset, \\
\bot, \text{ otherwise,}
\end{cases}
\end{align}
where $\mathit{rtube} = \ARtube(K_v',p_v, [0,T'])$, $0$ means $\unsafe$, $1$ means $\safe$, and $\bot$ means not stored.

It is equivalent to check the intersection of a reachtube of the real system~(\ref{sys:input}) with an unsafe set $U$ and to check the intersection of the corresponding mapped reachtube and unsafe set of the virtual one. This is formalized in the following lemma.

\begin{lemma}
	\label{lm:equiv_vir_real}
	Consider an unsafe set $U \subseteq \mathbb{R}^n \times \mathbb{R}^+$ and an over-approximated reachtube $\rtube = \ARtube(K,p,[t_1,t_2])$. Then, for any invertible $\gamma: \mathbb{R}^n \rightarrow \mathbb{R}^n$, $\mathit{rtube} \cap U \neq \emptyset$ if and only if $\gamma(\mathit{rtube}) \cap \gamma(U) \neq \emptyset$.
\end{lemma}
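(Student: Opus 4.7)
The plan is to prove the biconditional by a short direct argument built on the paper's convention for how a map $\gamma:\mathbb{R}^n \to \mathbb{R}^n$ extends to sets of state--time pairs. Recall from the notations paragraph that $\gamma$ acts on a subset of $\mathbb{R}^n \times \mathbb{R}^{\geq 0}$ by $\gamma(X,t) = (\gamma(X), t)$, i.e.\ it touches only the state component and leaves time untouched. With this in mind, the statement essentially reduces to the standard fact that bijections preserve intersections of subsets of their domain, applied slicewise in time.

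For the forward direction, I would suppose there exists a pair $(x,t) \in \rtube \cap U$. Applying $\gamma$ to this pair (extended to leave $t$ alone) yields $(\gamma(x),t) \in \gamma(\rtube)$ and $(\gamma(x),t) \in \gamma(U)$ directly from the definition of the set-image action, so $\gamma(\rtube) \cap \gamma(U) \neq \emptyset$. No invertibility is used here.

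For the reverse direction, I would suppose $(y,t) \in \gamma(\rtube) \cap \gamma(U)$. By definition of the set-image, there exist $x_1$ with $(x_1,t) \in \rtube$ and $\gamma(x_1)=y$, and $x_2$ with $(x_2,t) \in U$ and $\gamma(x_2)=y$. This is where invertibility (hence injectivity) of $\gamma$ enters: from $\gamma(x_1)=\gamma(x_2)$ we conclude $x_1 = x_2$, so the common pair $(x_1,t)$ lies in $\rtube \cap U$.

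The main (and only) subtlety is being careful that the map $\gamma$ as used in the statement is extended to act only on the spatial coordinate of a time-stamped set, which the paper has already fixed in the notations; otherwise the proof is immediate. I do not foresee any real obstacle, since both $\rtube$ and $U$ are treated as subsets of $\mathbb{R}^n \times \mathbb{R}^{\geq 0}$ in the obvious way (with $\rtube$ unfolded into the union $\bigcup_i X_i \times [\tau_{i-1},\tau_i]$), and the argument is entirely set-theoretic once $\gamma$ is injective.
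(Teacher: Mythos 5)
Your proof is correct. The paper in fact states this lemma without any proof, so there is no argument to compare against; your reasoning — the forward direction by taking images, the reverse direction using injectivity of $\gamma$ (extended to act as the identity on the time component, with the tube unfolded into $\bigcup_i X_i \times [\tau_{i-1},\tau_i]$) — is the standard set-theoretic justification the authors evidently had in mind, and you correctly observe that invertibility is needed only for the reverse implication.
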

	
Now that we have established the equivalence of safety checking between the real and virtual systems, we present Algorithm~\ref{code:sym_safety} denoted by $\symsafety$. It uses $\safetycache$, $\tubecache$, and $\symcompute$ in order to share safety verification computations across modes. The method $\symsafety$ would be called several times to check safety of different scenarios and $\safetycache$ and $\tubecache$ would be maintained across calls.

$\symsafety$ takes as input an initial set $K$, a mode $p$, a time bound $T$, an unsafe set $U$, the transformation $\gamma_p$, and $\safetycache$ and $\tubecache$ that resulted from previous runs of the algorithm.

It starts by transforming the initial and unsafe sets $K$ and $U$ to a virtual system initial and unsafe sets $K_v$ and $U_v$ using $\gamma_p$ in line~\ref{ln:transformK_s}. It then checks if a subsuming result of the safety check for the tuple $(K_v,T,U_v)$ exists in $\safetycache$ using its method $\mathtt{getIntersect}$ in line~\ref{ln:getintersect}. If it does exist, it returns it directly in line~\ref{ln:resreturn}. Otherwise, the approximate reachtube is computed using $\symcompute$ in line~\ref{ln:symcomputetube}. The returned tube is intersected with $U_v$ in line~\ref{ln:intersecttubeunsafe} and the result of the intersection is stored in $\safetycache$ in line~\ref{ln:saveresult}. It is returned in line~\ref{ln:resreturn}.

\begin{algorithm}
	\small
	\caption{$\symsafety$}
	\label{code:sym_safety}
	\begin{algorithmic}[1]
		\State \textbf{input:} $K, p, T, U, \gamma_p, \safetycache, \tubecache$
		\State $K_v \gets \gamma_p(K)$, $U_v \gets \gamma_p(U)$ \label{ln:transformK_s} 
		\State $\mathit{result} \gets \safetycache.\mathtt{getIntersect}(K_v,T,U_v)$ \label{ln:getintersect}
		\If{$\mathit{result} = \bot$}
		\State $\rtube \gets \symcompute$ $(K_v, T,\tubecache)$ \label{ln:symcomputetube}
		\State $\mathit{result} \gets (\mathit{tube} \cap U_v = \emptyset)$ \label{ln:intersecttubeunsafe}
		\State $\safetycache.\mathtt{storeIntersect}(K_v,T,U_v, result)$ \label{ln:saveresult}
		\EndIf
		\State \textbf{return:} {$\mathit{result}$}\label{ln:resreturn}
	\end{algorithmic}
\end{algorithm}

The following theorem show $\symsafety$ is able to prove safety.
\begin{theorem}
 If $\symsafety$ returns $\safe$, then $\Rtube(K,p,[0,T]) \cap U = \emptyset$.
\end{theorem}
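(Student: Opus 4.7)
The plan is to do a case split on how $\mathit{result}$ acquired the value $\safe$ inside Algorithm~\ref{code:sym_safety}. Line~\ref{ln:getintersect} sets $\mathit{result}$ from $\safetycache$, and only if that returns $\bot$ does Line~\ref{ln:symcomputetube} invoke $\symcompute$. So, when $\symsafety$ returns $\safe$, exactly one of two cases holds:
(A) the cache returned $1$, producing, by the definition of $\mathtt{getIntersect}$, a triple $(K_v', T', U_v') \in \safetycache$ with $K_v \subseteq K_v'$, $T \leq T'$, $U_v \subseteq U_v'$, and a stored $\AReach(K_v', p_v, [0, T'])$ disjoint from $U_v'$; or
(B) $\symcompute(K_v, T, \tubecache)$ returned a tube $\rtube$ with $\rtube \cap U_v = \emptyset$. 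In both cases $K_v = \gamma_p(K)$ and $U_v = \gamma_p(U)$.

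For Case (B), Theorem~\ref{thm:tube_comp_alg_sound} implies that $\rtube$ is a valid $(K_v, p_v, [0, T])$-reachtube of the virtual system. Applying $\gamma_p^{-1}$ pointwise and invoking Corollary~\ref{cor:virtual_reachtube}, $\gamma_p^{-1}(\rtube)$ is a reachtube of the real system with mode $p$ starting from $\gamma_p^{-1}(K_v) = K$. Lemma~\ref{lm:equiv_vir_real}, applied with the invertible map $\gamma_p^{-1}$, transforms $\rtube \cap U_v = \emptyset$ into $\gamma_p^{-1}(\rtube) \cap \gamma_p^{-1}(U_v) = \emptyset$, and $\gamma_p^{-1}(U_v) = U$ because $\gamma_p$ is invertible. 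Hence $\Rtube(K, p, [0, T]) \cap U = \emptyset$ for the reachtube $\gamma_p^{-1}(\rtube)$.

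For Case (A), I first need the monotonicity observation that $\reachset(K_v, p_v, [0, T]) \subseteq \reachset(K_v', p_v, [0, T'])$ whenever $K_v \subseteq K_v'$ and $T \leq T'$, which follows immediately from the definition~\eqref{eq:reach} of $\reachset$. Combined with the over-approximation property $\reachset(K_v', p_v, [0, T']) \subseteq \AReach(K_v', p_v, [0, T'])$ and the containment $U_v \subseteq U_v'$, this forces $\reachset(K_v, p_v, [0, T]) \cap U_v = \emptyset$. Since an exact reachset is itself a valid reachtube, one final application of Lemma~\ref{lm:equiv_vir_real} through $\gamma_p^{-1}$ delivers $\Rtube(K, p, [0, T]) \cap U = \emptyset$ in real coordinates.

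The main obstacle I foresee is purely bookkeeping: reconciling the symmetry lemmas, which are phrased for sets of states (and for sets of state-time pairs via $\gamma(S) = \{(\gamma(X), t) \mid (X, t) \in S\}$), with the $\safetycache$ subsumption predicate, which simultaneously compares three quantities ($K_v \subseteq K_v'$, $T \leq T'$, $U_v \subseteq U_v'$). In particular, for time-dependent unsafe sets, Lemma~\ref{lm:equiv_vir_real} must be invoked with $\gamma$ acting only on the state component of each pair; once this convention is fixed, each case reduces to a short chain of inclusions preserving emptiness of the intersection under the group action.
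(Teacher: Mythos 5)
Your proof is correct and follows essentially the same route as the paper: the computed-tube case is handled with exactly the same chain (Theorem~\ref{thm:tube_comp_alg_sound}, then Corollary~\ref{cor:virtual_reachtube}, then Lemma~\ref{lm:equiv_vir_real} to move the emptiness of the intersection back to real coordinates). The only difference is that your Case~(A) spells out the subsumption/monotonicity argument for a cache hit, which the paper dispatches in a single sentence (``the stored values in $\safetycache$ are results from previous runs, and hence have the same property''); your more explicit version is a welcome completion rather than a departure.
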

\begin{proof}
From Theorem~\ref{thm:tube_comp_alg_sound}, if the result is not stored in $\safetycache$, we know from Theorem~\ref{thm:tube_comp_alg_sound} that $\rtube$ in line~\ref{ln:symcomputetube} is an over-approximation of $\Rtube(K_v,p_v,[0,T])$. Moreover, we know from Corollary~\ref{cor:virtual_reachtube} that $\Rtube(K,p,[0,T]) \subseteq \gamma_p^{-1}(\rtube)$. But, from Lemma~\ref{lm:equiv_vir_real}, we know that the truth value of the predicate $(\rtube \cap U_v = \emptyset)$ is equal to that of $(\gamma_p^{-1}(\rtube) \cap U = \emptyset)$ and hence $\mathit{result}$ is $\safe$ if $\gamma_p^{-1}(\rtube) \cap U = \emptyset$ and thus it is $\safe$ if $\Rtube(K,p,T) \cap U = \emptyset$. Finally, the stored values in $\safetycache$ are results from previous runs, and hence have the same property. 
\end{proof}

However, if $\symsafety$ returns $\unsafe$, it might be that $\rtube$ in line~\ref{ln:symcomputetube} intersected the unsafe set because of an over-approximation error. There are two sources of such errors: the $\tubecompute$ method used by $\symcompute$ can itself result in over-approximation errors and actually it will most of the time~\cite{C2E2paper,CAS13}. But it may be exact too~\cite{bak2017hylaa}. The other source of errors is the $\tubecache.\mathtt{getTube}$ method which would return list of tubes with the union of their initial sets containing states that do not belong to the asked initial set. The first problem can be solved by: asking the method $\tubecompute$ to compute tighter reachtubes as existing methods provide this option at the expense of more computation~\cite{C2E2paper,CAS13}. We can use symmetry in these tightening computations as well, as we did in \cite{Sibai:ATVA2019}. We can also replace saved tubes in $\tubecache$ with newly computed tighter ones.
The second problem can be solved by asking $\tubecache.\mathtt{getTube}$ to return only the tubes with initial sets that are fully contained in the asked initial set. This would decrease the savings from transforming cached results, but it would reduce the false-positive error, saying $\unsafe$ while it is $\safe$.

\subsection{Unbounded time safety}

In this section, we show how can we deduce unbounded safety checks results from finite ones. 
The following corollary applies Lemma~\ref{lm:equiv_vir_real} to the transformations $\{\gamma_p\}_{p\in P}$ that map the different modes of the real system~(\ref{sys:input}) to the unique virtual one~(\ref{sys:virtual}).

\begin{corollary}[Infinite safety verification results from a single one]
	\label{cor:vir_real_eq_inter}
	Fix $U \subseteq \mathbb{R}^n$ and $\mathit{rtube} = \ARtube(K_v,p_v,[0,T]) $. If $\mathit{rtube} \cap U = \emptyset$, then for all $p \in P$, $\gamma_p^{-1}(\mathit{rtube}) \cap \gamma_p^{-1}(U) = \emptyset$.
\end{corollary}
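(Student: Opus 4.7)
The plan is to observe that Corollary~\ref{cor:vir_real_eq_inter} is essentially an immediate consequence of Lemma~\ref{lm:equiv_vir_real} applied mode by mode, exploiting that each $\gamma_p$ is invertible by construction of the virtual system in Section~\ref{sec:virtual}. First I would fix an arbitrary $p \in P$. By the setup of the virtual system, the symmetry map $\gamma_p$ is invertible, and therefore so is $\gamma_p^{-1}$ (with inverse $\gamma_p$).

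Next I would invoke Lemma~\ref{lm:equiv_vir_real} with $\gamma$ instantiated to $\gamma_p^{-1}$, the reachtube $\mathit{rtube} = \ARtube(K_v,p_v,[0,T])$, and the unsafe set $U$ (treated as a subset of $\mathbb{R}^n \times \mathbb{R}^{\geq 0}$, time-independent if necessary). The lemma gives the equivalence
\[
\mathit{rtube} \cap U \neq \emptyset \iff \gamma_p^{-1}(\mathit{rtube}) \cap \gamma_p^{-1}(U) \neq \emptyset.
\]
Taking contrapositives, the hypothesis $\mathit{rtube} \cap U = \emptyset$ yields $\gamma_p^{-1}(\mathit{rtube}) \cap \gamma_p^{-1}(U) = \emptyset$. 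Since $p$ was arbitrary, the conclusion holds for every $p \in P$.

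There is no real obstacle here; the only subtlety worth spelling out is justifying that $\gamma_p^{-1}$ qualifies as the ``$\gamma$'' in Lemma~\ref{lm:equiv_vir_real}, which only requires invertibility on $\mathbb{R}^n$, and noting that the set-image notation $\gamma(S) = \{\gamma(x)\mid x \in S\}$ from the Notations paragraph makes the elementary identity $x \in S \iff \gamma(x) \in \gamma(S)$ (for invertible $\gamma$) trivially applicable to both $\mathit{rtube}$ and $U$. Thus the corollary follows by one application of the lemma per mode $p$, and no further calculation is needed.
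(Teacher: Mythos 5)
Your proposal is correct and matches the paper's own justification, which is precisely to apply Lemma~\ref{lm:equiv_vir_real} to the (invertible) transformations $\{\gamma_p\}_{p\in P}$ and take the contrapositive. Your added remarks about instantiating the lemma's $\gamma$ as $\gamma_p^{-1}$ and about the minor type mismatch between $U \subseteq \mathbb{R}^n$ here and $U \subseteq \mathbb{R}^n \times \mathbb{R}^{\geq 0}$ in the lemma are sensible but do not change the argument.
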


The corollary means that from a single reachtube $\ARtube(K,p_v,[0,T])$ and unsafe set $U$ intersection operation, we can deduce the safety of any mode $p$ starting from $\gamma_p^{-1}(K)$ and running for $T$ time units with respect to $\gamma_p^{-1}(U)$. This would for example imply unbounded time safety of a hybrid automaton under the assumption that the unsafe sets of the modes are at the same relative position with respect to the reachtube. But, $\safetycache$ stores a number of results of such operations. We can infer from each one of them the safety of infinite scenarios in every mode. This is formalized in the following theorem which follows directly from Corollary~\ref{cor:vir_real_eq_inter}.

\begin{theorem}[Infinite safety verification results from finite ones]
\label{thm:unboundedsafety}
For any mode $p \in P$, initial set $K \subseteq S$, time bound $T \geq 0,$ and unsafe set $U \subset S \times \mathbb{R}^{\geq 0}$, such that $K \subseteq \gamma_p^{-1}(K')$, $U \subseteq \gamma_p^{-1}(U')$, and $\safetycache(K',T,U') = 1$, system (\ref{sys:input}) is safe. 
\end{theorem}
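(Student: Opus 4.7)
The plan is to unfold the cached-safety hypothesis, lift the inclusion through $\gamma_p^{-1}$, and chain two monotonicity facts to conclude that the real reachtube avoids $U$. I expect every step to be routine; the only care needed is to track the three inclusions ($K \subseteq \gamma_p^{-1}(K')$, $U \subseteq \gamma_p^{-1}(U')$, $T$ bounded) simultaneously and not to confuse the cached ``witness'' tuple with $(K',T,U')$ itself.

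First I would unpack the hypothesis $\safetycache(K',T,U') = 1$ using the definition of $\mathtt{getIntersect}$. This guarantees that there exists a stored tuple $(K_v',T_v',U_v')$ with $K' \subseteq K_v'$, $T \le T_v'$, and $U' \subseteq U_v'$, for which $\ARtube(K_v',p_v,[0,T_v']) \cap U_v' = \emptyset$. By the standard monotonicity of reachtubes in the initial set and time horizon, $\ARtube(K',p_v,[0,T]) \subseteq \ARtube(K_v',p_v,[0,T_v'])$, and since $U' \subseteq U_v'$ we may conclude $\ARtube(K',p_v,[0,T]) \cap U' = \emptyset$.

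Next I would push this virtual-side safety statement back to the real system via the symmetry map $\gamma_p^{-1}$. Applying Lemma~\ref{lm:equiv_vir_real} (or equivalently Corollary~\ref{cor:vir_real_eq_inter}) with the invertible map $\gamma_p$ gives
\[
\gamma_p^{-1}\bigl(\ARtube(K',p_v,[0,T])\bigr) \cap \gamma_p^{-1}(U') = \emptyset.
\]
By Corollary~\ref{cor:virtual_reachtube}, the left factor is a valid reachtube of the real system~(\ref{sys:input}) in mode $p$ starting from the initial set $\gamma_p^{-1}(K')$.

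Finally I would close the argument using the two remaining hypotheses of the theorem. Since $K \subseteq \gamma_p^{-1}(K')$, monotonicity of reachtubes in the initial set yields
\[
\reachset(K,p,[0,T]) \subseteq \gamma_p^{-1}\bigl(\ARtube(K',p_v,[0,T])\bigr),
\]
and since $U \subseteq \gamma_p^{-1}(U')$, intersecting with $U$ gives $\reachset(K,p,[0,T]) \cap U = \emptyset$, which is the desired safety conclusion. The main (mild) obstacle is simply bookkeeping: making sure the two separate inclusions at the virtual level (the cached ``witness'' inclusions and the user-supplied $K', U'$) compose correctly with the single symmetry transformation $\gamma_p$, so that no mismatch between the virtual coordinates and the real ones is introduced.
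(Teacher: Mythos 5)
Your proof is correct and follows essentially the same route the paper intends (the paper simply asserts that the theorem ``follows directly from Corollary~\ref{cor:vir_real_eq_inter}''): unpack the cached safe result, transfer it to the real system via the symmetry map, and close with the two inclusions on $K$ and $U$. The only nitpick is that the containment $\ARtube(K',p_v,[0,T]) \subseteq \ARtube(K_v',p_v,[0,T_v'])$ is not literally guaranteed for arbitrary over-approximating tubes; that monotonicity step should be phrased on the exact reachsets $\reachset(\cdot)$ (which are monotone in the initial set and the time horizon and are contained in the cached over-approximation), after which your argument goes through unchanged.
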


As more saved results are added to $\safetycache$, we can deduce the safety of more scenarios in all modes. If at a given point of time, we are sure that no new scenarios would appear, we can deduce the safety for unbounded time and unbounded number of agents with the same dynamics having scenarios already covered.

\begin{example}[Fixed-wing aircraft infinite number of safety verification results from computing a single one]
\label{sec:inf_safetychecks_example}
Consider the initial set $K$, mode $p$, time bound $T$, and their corresponding virtual ones $K_v$ and $p_v$ considered in Example~\ref{sec:virtual_tube_example}. Let the unsafe set be $U = [[0, -\infty, 11.9,5.1],[\infty, \infty, 12.9, 6.1]] \times \mathbb{R}^{\geq 0}$ and $U_v = \gamma_{p_r}(U)$, where $\gamma_{p_r}$ is as described in Example~\ref{sec:virtual_tube_example}. Assume that $\rtube_v \cap U_v = \emptyset$ and the result is stored in $\safetycache$. Then, for all $p \in P$, $\gamma_p^{-1}(\rtube_v) \cap \gamma_p^{-1}(U_v) = \emptyset$.

For the aircraft, $U$ could represent a building. Crashing with the building at any speed, heading angle, and time is unsafe. $U_v$ represents the relative position of the building with respect to the segment of waypoints.
Theorem~\ref{thm:unboundedsafety} says that for any initial set of states $K$ of the aircraft and time bound $T$, if the relative positions of the aircraft, unsafe set, and the segment of waypoints are the same or subsumed by those of $K_v$, $U_v$, and the origin, we can infer safety irrespective of their absolute positions.
\end{example}

\begin{example}[Fixed-wing aircraft unbounded time safety]
\label{sec:unbounded_example}
Consider a sequence of modes $ p_0 = [0,0,10,0], p_1 = [10,0,10,10]$, $p_3 = [10,10, 0,10]$, $p_4 = [0,10,0,20]$, $p_5 = [0,20,20,20], p_6 = [20,20,20,30] \dots $ in $P$ which draw a vertical concatenations of $S$ shaped paths with segments of size 10. Assume that the aircraft in Example~\ref{sec:single_drone_example} switches between different modes once its position $[\state[2], \state[3]]$ is in the square with side equal to $1$ around the destination waypoint $[p_i[2],p_i[3]]$, the absolute value of the angle with the segment is at most $\frac{\pi}{6}$, and its speed within $0.5$ of $v_c$. We call this the $\mathit{guard}(p_i)$. Moreover, we assume that for each waypoint, there exists a corresponding unsafe set $U_{p_i} = [[0, -\infty, b_1,b_2],[\infty,\infty,u_1,u_2]]$, where the square $[[b_1,b_2],[u_1,u_2]]$ has radius 1 and center 2 units to the left of the center of the segment of $p_i$. One would verify the safety of such a system by computing the reachtube $\rtube_{p_i}$ of a mode $p_i$ and checking if it is safe by intersecting with $U_{p_i}$. If it is safe, we compute the intersection of $\rtube_{p_i}$ with $\mathit{guard}(p_i)$ to get the initial set for $p_{i+1}$, and repeat the process again. However, we know that the initial set of any mode $p_i$ is at most $\mathit{guard}(p_i)$ which we know its relative position with respect to the segment. Moreover, we know the position of the unsafe set with respect to the segment. Then, one can deduce the unbounded-time safety of this system by just checking the safety of one segment as all scenarios are the same in the virtual system.

\end{example}

\section{Experimental evaluation}
\label{sec:experiments}

We implemented a safety verification software tool for multi-agent hybrid systems based on $\symcompute$ using Python 3. We call it $\ourtool$. We tested it on a linear dynamical system and the aircraft model of Example~\ref{sec:single_drone_example}  using DryVR~\cite{FanQMV:CAV2017} and Flow*~\cite{CAS13} as reachability subroutines.	

 The section is organized as follows: we describe the multi-agent verification algorithm $\multiagentverif$ that we designed and implemented in $\ourtool$ in Section~\ref{sec:multi-verif-algorithm}, describe the $\symcompute$ implementation in Section~\ref{sec:cacheimplementation}, 
 and finish with the results of the experiments and corresponding analysis in Section~\ref{sec:results}.

\subsection{$\ourtool$ and $\multiagentverif$: multi-agent safety verification algorithm}
\label{sec:multi-verif-algorithm}

The pseudo code for the multi-agent verification algorithm $\multiagentverif$ that we implement in $\ourtool$ is shown in Algorithm~\ref{code:multiagent-verif}. Our tool $\ourtool$ takes as input a JSON file specifying a list of $N$ agents of dimension $n$ in line~\ref{ln:multi_get_input}. It also specifies the python file that contains the dynamics function $f$ of Definition~\ref{def:model_syn} and the symmetry transformation function. The transformation $\gamma$ would be equal to the map $\gamma_p$ that transform solutions of the real system~(\ref{sys:input}) to those of the virtual one~(\ref{sys:virtual}), given a mode $p \in P$. The list of modes that the $\mathit{i}^{\mathit{th}}$ agent transition between sequentially and their corresponding guards is specified as well and denoted by $H_i$ in line~\ref{ln:multi_get_input}. The guard of the $j^{\mathit{th}}$ mode $H_i[j].p$ of the $i^{\mathit{th}}$ agent $H_i[j].\guard$ is a  predicate on the agent state which when satisfied, the agent transition mode to the $(j+1)^{\mathit{st}}$ mode. The guard $H_i[j]$ has time bound $H_i[j].T$ on how long the agent can stay in that mode. Moreover, for each agent $i \in [N]$, it specifies the initial set of states $K_i$. Finally, it specifies the static unsafe set $U$ and the subset $\relevant \subseteq [n]$ that is relevant for dynamic safety checking between agents. If the reachtubes of two agents projected on $\relevant$ intersect each other, it would model a collision between the agents. For example, $\relevant$ would be $\{2,3\}$ for the aircraft model in Example~\ref{sec:single_drone_example} as $(\state[2], \state[3])$ represents its position.
 
$\ourtool$ would return $\unsafe$ if the reachtubes of the agents starting from their initial sets of states and following the sequence of modes intersect static unsafe set in line~\ref{ln:multi_ret_static}, or when projected to $\relevant$, intersect each other in line~\ref{ln:multi_ret_dynamic}. It would return $\safe$ in line~\ref{ln:multi_safe}, otherwise. 

Currently, $\ourtool$ assumes that all agents share the same dynamics but do not interact. Hence, it has a single $\tubecache$ that is shared by all (line~\ref{ln:multi_init}). Current nonlinear hybrid-system verification tools such as C2E2~\cite{C2E2paper} or Flow*~\cite{CAS13} would consider the combination of the agents as a single system. The dimension of this system would be $n \times N$. It would have a parameter of size equal to $N$ times the size of the parameter vector of a single agent. This parameter would change values whenever an agent in the system switches modes. The unsafe set of such a system would be the static unsafe sets specified in the input JSON file and the dynamic unsafe set would be the safety relevant dimensions of a pair of agents being equal at the same time which detects a collision. This unsafe set would be time annotated.
Such verification tools would compute the reachtube of such a system sequentially on a per mode basis. The reachtube of the mode would be intersected with the guard to get the initial set of the next mode. Another reachtube would be computed starting from the computed initial set. Each of these tubes is intersected with the unsafe sets to get the verification result.

$\ourtool$ computes the reachtubes of individual agents independently. It would compute the reachtube $\modetube_{i}$ of the $j^{\mathit{th}}$ mode $H_i[j]$ of the $i^{\mathit{th}}$ agent in lines~\ref{ln:multi_transform_K}-\ref{ln:multi_transform_tube_back} using $\symcompute$ and intersect it with the guard using a function $\mathit{guardIntersect}$ to get the initial set $\initset_i$ for the next mode in line~\ref{ln:multi_guard_intersect}. In addition to $\initset_i$, $\mathit{guardIntersect}$ computes the minimum and maximum times: $\mathit{mintime}_i$ and $\mathit{maxtime}_i$, respectively, at which $\modetube_i$ intersects the guard. The value $\mathit{mintime}_i$ is the time at which a trajectory of the next mode may start at  and $\mathit{maxtime}_i$ is the maximum such time. These values will essential time information to check safety against time-annotated unsafe sets. That is because these values defined the range of time the trajectories starting from the initial set of the next mode may start at.

 The computed tube $\modetube_{i}$ gets appended to $\agenttube_i$ storing the full reachtube of the $i^{\mathit{th}}$ agent in line~\ref{ln:multi_transform_tube_back}.  The benefit of this method is that now all modes of all agents can be mapped to a single virtual system. They can resuse each others reachtubes using $\tubecache$  that is getting updated at every call to $\symcompute$. Moreover, the static safety is done in the usual way in line~\ref{ln:multi_ret_static}. The collision between agents is done by the function $\dynamicsafety$. 
 
 The function $\dynamicsafety$ takes two full reachtubes of two agents $\agenttube_1$ and $\agenttube_2$ along with two arrays $\mathit{lookback}_1$ and $\mathit{lookback}_2$. For agent $i$, the array $\mathit{lookback}_i$ consists of pairs of integers $(\mathit{ind}_j, \mathit{timerange}_j)$ specifying the index identifying the beginning of the $j^{\mathit{th}}$ mode tube in $\agenttube_i$ and the uncertainty in the starting time of the trajectories its initial set. $\dynamicsafety$ would use this information to time-align parts of $\agenttube_1$ and $\agenttube_2$ so that the intersection check happens only between two sets that may have been reached at the same time by the two agents.
 
 \begin{algorithm}
 	\small
 	\caption{$\multiagentverif$}
 	\label{code:multiagent-verif}
 	\begin{algorithmic}[1]
 		\State \textbf{input:} $\{K_i\}_{i=1}^N$, $\{H_i\}_{i=1}^N$, $\gamma$, $U$, $f$, $\relevant$ \label{ln:multi_get_input}
 		\State $\tubecache \gets \emptyset$, $\mathit{agentstubes} \gets \emptyset$, $\mathit{lookback} \gets \emptyset$ \label{ln:multi_init}
 		\For{agent $i$ in $[N]$}
 		\State $\initset_i \gets K_i$, $\agenttube_i \gets \emptyset$, $\mathit{lookback}[i] \gets \emptyset$ 
 		\For{mode index $j$ in $[\mathit{length}(H_i)]$ }
 		\State $K_{v,i} \gets \gamma_{i,j}(\initset_i)$ \label{ln:multi_transform_K}
 		\State $\modetube_{v,i} \gets \symcompute\mathit{(K_{v,i}, H_i[j].T,\tubecache)}$ \label{ln:multi_compute_tube}
 		\State $\modetube_i \gets \gamma_{i,j}^{-1}(\modetube_{v,i})$ \label{ln:multi_transform_tube_back}
 		\If{$\modetube_{i} \cap U$}
 		\State \textbf{return:} $\unsafe$ \label{ln:multi_ret_static}
 		\EndIf
 		\State $(\initset_i, \mathit{mintime}, \mathit{maxtime}) \gets \guardintersect(\modetube_i, H_i[j].\mathit{guard})$ \label{ln:multi_guard_intersect}
 		\State $\mathit{lookback}[i].\mathit{append}((\agenttube_i.\ltube, \mathit{maxtime} - \mathit{mintime}))$
 		\State $\agenttube_i \gets \agenttube_i \concat \modetube_i$ \label{ln:appendtube}
 		\EndFor
 		\For{agent $k$ in $[i-1]$}
 		\If{$\dynamicsafety\mathit{(\agenttube_i,\mathit{lookback}[i], \agenttube_k, \mathit{lookback}[k],\relevant)} == \unsafe$}
 		\State \textbf{return:} $\unsafe$ \label{ln:multi_ret_dynamic}
 		\EndIf
 		\EndFor
 		\State $\mathit{agentstubes}[i] \gets \agenttube_i$
 		\EndFor
 		\State \textbf{return:} $\safe$, $\mathit{agentstubes}$ \label{ln:multi_safe}
 	\end{algorithmic}
 \end{algorithm}


\subsection{$\symcompute$ implementation}
\label{sec:cacheimplementation}

To implement the $\tubecache$ in $\ourtool$, we grid the state space with a resolution $\delta \in \mathbb{R}^n$, where $\mathbb{R}^n$ is the state space of an agent. It would be then a dictionary with keys being centers of the grid cells and values being reachtubes with initial sets being the corresponding cells.

 Given an initial set $K_v$, $\symcompute$ would compute its bounding box, quantize its bottom-left and upper-right corners with respect to the grid, iterate over all cells and check if that cell intersects with $K_v$. If it does, we check if $\tubecache$ has the corresponding tube and add it to the result. Otherwise, we compute it from scratch and store it in $\tubecache$ with the key being the center of the cell. Then, $K_v'$ in $\symcompute$ would consist of the cells that it did not find the reachtube for and $\mathit{storedtubes}$ would consists of the reachtubes that we retrieved.

\subsection{Experimental results}
\label{sec:results}
We ran experiments using our tool $\ourtool$ on two models: a 3-dimensional linear dynamical system example and the nonlinear aircraft model described in Example~\ref{sec:single_drone_example}. The linear model is of the form $\dot{\state} = A(\state - p[3:6])$, where $A = [[-3,1,0],[0,-2,1],[0,0,-1]]$, $\state \in \mathbb{R}^3$, and  $p \in \mathbb{R}^6$. We considered scenarios with single, two, and three agents for each model following different sequences of waypoints. The sequences of waypoints for the linear model are translations and rotations of an $S$ shaped path. For the aircraft model, the paths are random crossing paths going north-east. In every scenario, all the agents have the same model. In the aircraft scenarios, the agent would switch to the next waypoint once its x, y position is within 0.5 units from the current waypoint in each dimension. The initial set of the aircraft was of size 1 in the position components, 0.1 in the speed, and 0.01 in the heading angle. We used Flow*~\cite{CAS13} and DryVR~\cite{FanQMV:CAV2017} to compute reachtubes from scratch for the linear example.  We only used DryVR for the aircraft model since our C++  Flow* wrapper does not handle a model having $\mathit{atan2}$ in the dynamics.  We ran Algorithm~\ref{code:multiagent-verif} on all scenarios in $\ourtool$ with and without using $\tubecache$. 
The symmetry used for the aircraft was the one we showed in Example~\ref{sec:virtual_example}. For the linear model, the symmetry transformation $\gamma_p$ that was used to map the state to the virtual system was by coordinate transformation where the new origin is at the next waypoint $p[3:5]$ and rotating the $xy$-plane by the angle between the previous and the next waypoints $p[0:2]$ and $p[3:5]$  projected to the plane.
We compared the computation time with and without symmetry and show the results in Table~\ref{table:experimental_results}. The reachtubes for a single and three linear agents are shown in Figures~\ref{fig:linear_nosym_1_agents_sym} and~\ref{fig:linear_nosym_3_agents_sym}.

\begin{table*}
	\centering
	\small
	\caption{\small Results.}\label{table:experimental_results}
\begin{tabular}{| l | l | l | l | l| l | l | l|}
	\hline 
	\multicolumn{2}{|l|}{tool \textbackslash $\:$ agent model}    &  \multicolumn{3}{|l|}{Linear(1,2,and 3 agents)} & \multicolumn{3}{|l|}{aircraft(1,2,and 3 agents)} \\
	\hline
	\multirow{3}*{\sf Sym-DryVR}      & computed &57 &90   &90 & 635.23&1181.38 &1550.62 \\
	& transformed &42 &165  &264 &20.76 &286.62 &501.38\\
	& time (min)  &0.093  &0.163   &0.187 & 3.42&8.2&10.59\\
	\hline 
	\multirow{3}*{\sf Sym-Flow*}      & computed &39.8 &61.14   &66.15 & && \\
	& transformed &19.2  &84.85  &143.85 &NA&NA&NA\\
	& time (min) &0.387  & 0.62  &0.684 &&& \\
	\hline 
	\multirow{2}*{\sf NoSym-DryVR}      & computed  &99 &255   &354 & 656&1468 &2052 \\
	& time (min)  &0.062  &0.355   &0.52 &3.71 &10.78 &15.47\\
	\hline 
	\multirow{2}*{\sf NoSym-Flow*}      & computed  &59 &151   &210 & && \\
	& time (min) &0.53  &1.328   &1.5 &NA&NA&NA \\
	\hline 
\end{tabular}
\end{table*}

\begin{figure}
	\begin{subfigure}[t]{0.5\textwidth}
		\includegraphics[width=\textwidth]{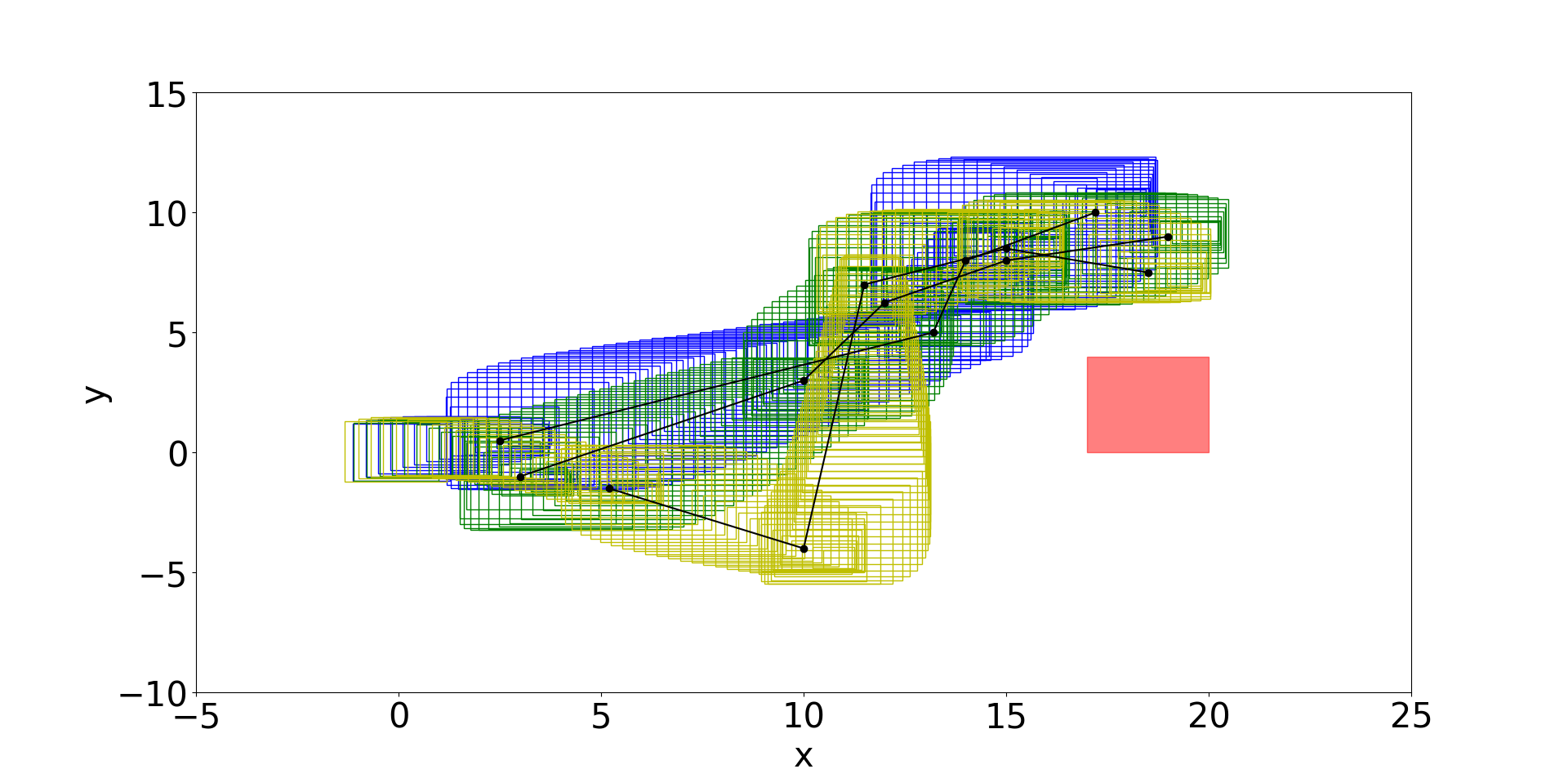}
		\caption{Reachtubes for 3 drones.\label{fig:linear_nosym_1_agents_sym}}
	\end{subfigure}
	\begin{subfigure}[t]{0.5\textwidth}
		\includegraphics[width=\textwidth]{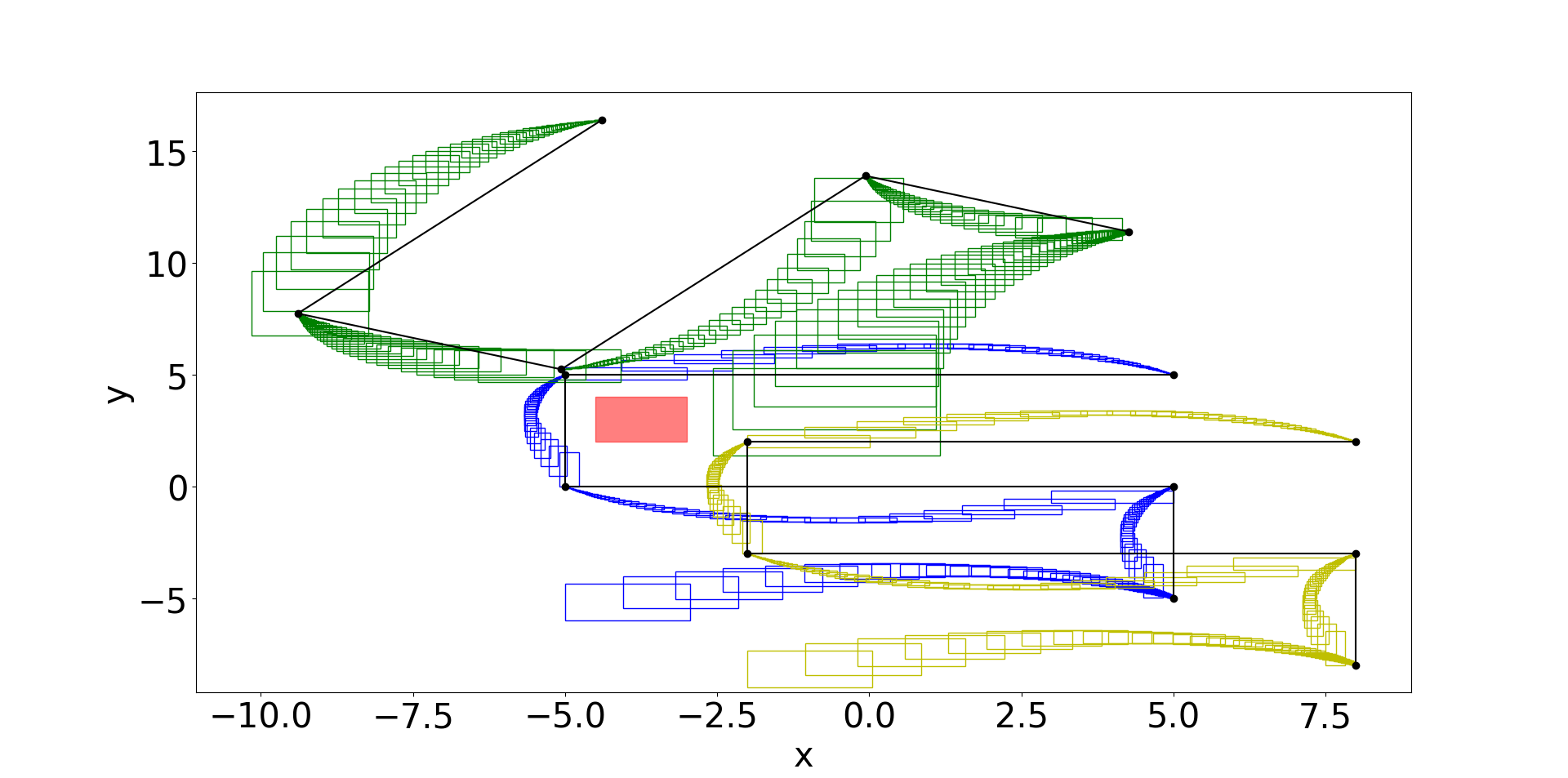} 	
		\caption{Reachtubes for 3 linear agents. \label{fig:linear_nosym_3_agents_sym}}

	\end{subfigure}
\end{figure}
In Table~\ref{table:experimental_results}, we call the combination of $\multiagentverif$ with DryVR while using $\tubecache$, {\sf Sym-DryVR} for symmetric DryVR. We call it {\sf Sym-Flow*} if we are using Flow* instead. If we are not using $\tubecache$, we call them {\sf NoSym-DryVR} and {\sf NoSym-Flow*}, respectively.
Remember in $\symcompute$, some tubes may be cached but they have shorter time horizons than the needed tube. So, we compute the rest from scratch. Here, we report the fractions of tubes computed from scratch and tubes that were transformed from cached ones. For example, if three quarters of a tube was cached and the rest was computed, we add 0.75 to the ``computed'' row and 0.25 to the ``transformed'' one.  Moreover, we report the execution time till the tubes are computed. In the experiments, we always compute the full tubes even if it was detected to be unsafe earlier to have a fair comparison of running times. Moreover, the execution time does not include dynamic safety checking as the four versions of the tool are doing the same computations. We are using $\ourtool$ in all scenarios with other reachability computation tools to decrease the degrees of freedom and show the benefits of transforming reachtubes over computing them. The {\sf Sym} versions result in decrease of running time up-to 2.7 times in the linear case with three agents. The ratio of transformed vs. computed tubes increases as the number of agents increase. This means that different agents are sharing reachtubes with each other in the virtual system. The total number of reachtubes is the same, whether $\tubecache$ is used or not. This means that the quality of the tubes , i.e. how tight they are, is the same whether we are transforming from $\tubecache$ or computing from scratch since the initial sets of modes are computed from intersections of reachtubes with guards. If the reachtube is coarser, the larger the initial set would get and the more reachtubes need to be computed.





\subsection{Discussion and conclusions}
\label{sec:conclusion}

In this paper, we investigated how symmetry transformations and caching can help achieve scalable, and possibly unbounded, verification of multi-agent systems.  
We developed a notion of {\em virtual system\/} which  define symmetry transformations for a broad class of hybrid and dynamical agent models visiting a waypoint sequences. 
Using virtual system, we present a prototype tool called $\ourtool$ that builds a cache of reachtubes for the transformed virtual system, in a way that is agnostic of the representation of the reachsets and the reachability analysis subroutine used. 
Our experimental evaluation show significant improvement in computation time on simple examples and increased savings as number of agents increase.

Several research directions are suggested by these results. We aim to develop stronger static analysis approaches for automatically checking the hypothesis of Theorem~\ref{thm:unboundedsafety} which allows unbounded time safety verification. Applying $\ourtool$ to analyze higher-dimensional agent models using other reachability subroutines would provide practical insights about this approach. Finally, exploiting symmetries for synthesis is another interesting direction.

%
%
%
%
%
%
\small
 \bibliographystyle{splncs04}
 \bibliography{sayan1,hussein}
 
 \normalsize
 
 \appendix

\end{document}